\newtheorem{definition}{Definition}[section]
\newtheorem{lemma}{Lemma}[section]
\newtheorem{theorem}{Theorem}[section]
\newcommand\numberthis{\addtocounter{equation}{1}\tag{\theequation}}
\def\la{\left\langle}
\def\ra{\right\rangle}
\def\lb{\left(}
\def\rb{\right)}
\def\lcb{\left\{}
\def\rcb{\right\}}
\def\ln{\left\|}
\def\rn{\right\|}
\def\lsb{\left[}
\def\rsb{\right]}
\def\lab{\left|}
\def\rab{\right|}
\def\A{\mathcal{A}}
\def\C{\mathbb{C}}
\def\S{\mathcal{S}}
\def\P{\mathcal{P}}
\def\N{\mathcal{N}}
\def\H{\mathcal{H}}
\def\D{\mathcal{D}}
\def\O{\mathcal{Q}}
\def\R{\mathbb{R}}
\def\E{\mathbb{E}}
\def\I{\mathcal{I}}
\def\BZ{\bm{Z}}
\def\BH{\bm{H}}
\def\BL{\bm{L}}
\def\BU{\bm{U}}
\def\BV{\bm{V}}
\def\BB{\bm{B}}
\def\BF{\bm{F}}
\def\BC{\bm{C}}
\def\BD{\bm{D}}
\def\BW{\bm{W}}
\def\BQ{\bm{Q}}
\def\BLa{\bm{\Lambda}}
\def\BA{\bm{A}}
\def\BM{\bm{M}}
\def\BX{\bm{X}}
\def\BY{\bm{Y}}
\def\BQ{\bm{Q}}
\def\BR{\bm{R}}
\def\BS{\bm{\Sigma}}
\def\BG{\bm{G}}
\def\BXi{\bm{\xi}}
\def\BI{\bm{I}}
\def\Pr{\mathbb{P}}
\def\bx{\bm{x}}
\def\be{\bm{e}}
\def\bc{\bm{c}}
\def\bz{\bm{z}}
\def\by{\bm{y}}
\def\bb{\bm{b}}
\def\bw{\bm{w}}
\def\br{\bm{r}}
\def\bh{\bm{h}}
\def\bg{\bm{g}}
\DeclareMathOperator{\hard}{\mathcal{H}}
\DeclareMathOperator{\rank}{rank}
\DeclareMathOperator{\diag}{diag}
\DeclareMathOperator{\trace}{trace}
\DeclareMathOperator{\sigmin}{\sigma_{\min}}
\DeclareMathOperator{\sigmax}{\sigma_{\max}}
\DeclareMathOperator{\II}{{\textbf{\mbox{I}}}}
\begin{document}

\title{\textbf{Painless Breakups -- \\ Efficient Demixing of Low Rank Matrices}}


\author{Thomas Strohmer and Ke Wei\thanks{The authors acknowledge support from the NSF via grants DTRA-DMS 1322393 and DMS 1620455.}
 \\ Department of Mathematics \\ University of
California at Davis \\ Davis, CA 95616 \\ \{strohmer,kewei\}@math.ucdavis.edu}

\maketitle

\begin{abstract} 
Assume we are given a sum of linear measurements of $s$ different rank-$r$ matrices of the form $\by = \sum_{k=1}^{s} \A_k (\BX_k)$.
When and under which conditions is it possible to extract (demix) the individual matrices $\BX_k$ from the single measurement vector $\by$?
And can we do the demixing numerically efficiently?  
We  present two computationally efficient algorithms based on hard thresholding to solve this low rank demixing problem. We prove that under suitable
conditions these algorithms are guaranteed to converge to the correct solution at a linear rate. We discuss applications in connection with quantum tomography and the 
Internet-of-Things. Numerical simulations demonstrate the empirical  performance of the proposed
algorithms.
\end{abstract}

\section{Introduction}
\label{s:intro}


Demixing problems appear in a wide range of areas, including audio source separation~\cite{LXQZ09}, image processing~\cite{CE07}, communications engineering~\cite{WP98},
and astronomy~\cite{starck2005image}.  A vast amount of recent literature focuses on the demixing of signals  with very different properties. For instance one may be interested in the demixing of a signal that is the sum of spikes and sines~\cite{mccoy2014convexity} or the separation of a matrix that is the sum of a sparse matrix and a low rank matrix~\cite{clmw2006robust}.  
This paper focuses on the demixing problem where the signals to be separated all have {\em similar} properties, i.e., they are all low rank matrices. Assume we are given a sum of linear measurements
\begin{equation}
\by = \sum_{k=1}^{s} \A_k (\BX_k),
\label{lowranksum}
\end{equation}
where  $\{\A_k\}_{k=1}^s$ is a set of  linear operators from  $n\times n$ matrices to $m$-dimensional vectors and the constituents $\{\BX_k\}_{k=1}^s$ are unknown  rank-$r$ matrices. Our goal is to extract each constituent matrix $\BX_k$ from the single observation vector $\by$. We face different challenges in this homogeneous scenario, and one way to make the separation possible is by ensuring that the sensing operators $\A_k$ are sufficiently different from each other. The problem of demixing a sum of low rank matrices arises in  quantum tomography~\cite{deville2016concepts}, 
dictionary learning~\cite{huang2015convolutional} and wireless communications~\cite{WBSJ14}.

In quantum tomography one tries to reconstruct an unknown quantum state from experimental data~\cite{kliesch2017guaranteed}.
The state of a quantum system in quantum mechanics is often described by a density matrix, which is 
a positive semi-definite Hermitian matrice with unit trace. 
Many density matrices of interest are (or at least approximately) low rank---for instance pure states can be represented by rank-one matrices.
Hence, in quantum tomography the goal is to reconstruct low rank matrices from a set of linear measurements, and ideally by using as few measurements as possible. 
In~\cite{deville2016concepts},  a specific measurement protocol for quantum tomography is described in which  states get mixed together. 
In this case one has to reconstruct {\em and} demix quantum states (i.e., low rank matrices) from a set of  linear measurements.

Demixing problems of the form~\eqref{lowranksum} are also expected to arise in the future
Internet-of-Things (IoT). The IoT will connect billions of wireless devices, which is far more than the current
wireless systems can technically and economically accommodate. One of the many challenges in the design of the IoT
will be its ability to manage the massive number of \emph{sporadic traffic} generating devices which are  inactive most of 
the time, but regularly access the network for minor updates with no human interaction~\cite{WBSJ14}. 
It is common understanding among communication engineers that this traffic cannot be handled within the current 
random access procedures. Dimensioning the channel access according to classical information and communication
theory results in a severe waste of resources which does not scale towards the requirements of the IoT.
This means among others that the overhead caused by the exchange of certain types of information between
transmitter and receiver, such as channel estimation, assignment of data slots, etc, has to be avoided as much as possible.
Without explicit channel estimation the receiver needs to blindly deconvolve (to undo the effect of the channel) and demix the signals
arriving from many different devices. We will describe in Section~\ref{s:numerics} how this blind deconvolution-demixing problem can be phrased
as a demixing problem of rank-one matrices.


\subsection{State of the Art}

The low rank matrix demixing problem in~\eqref{lowranksum} is a generalization of the well-known low rank matrix recovery 
problem~\cite{rechtfazelparrilo2010nnm}. The task in low rank matrix recovery is to reconstruct a single low rank matrix $\BX$ from a few linear measurements $\by =  \A (\BX)$; that is, we
have $s=1$ in~\eqref{lowranksum}. This problem finds applications in a wide range of disciplines, including quantum tomography~\cite{kliesch2017guaranteed}, and image processing~\cite{rechtfazelparrilo2010nnm,zhou2015low}. Many different algorithms have been proposed for its solution~\cite{zhou2015low}, including convex optimization based methods~\cite{candesrecht2009mc} and non-convex methods such as thresholding 
algorithms~\cite{cai2010singular,tw2012nihtmc}. The thresholding-based algorithms proposed in this paper can be viewed as  natural generalizations of the latter algorithms.

There is a plethora of literature of various demixing  problems, such as ~demixing of a  low rank matrix and a sparse matrix, see for example~\cite{clmw2006robust}. 
However, most of these problems differ considerably from the one studied  in this paper, since they focus on the separation of two signals with complementary properties. Therefore we will not discuss them here in any further detail.  

Two of the first papers  that consider the demixing of a sum of low rank matrices are~\cite{WGMM13} and \cite{mccoy2013demixing}. In~\cite{mccoy2013demixing}, 
the authors consider a general demixing framework of the form
\begin{equation}
\by = \BA \left( \sum_{k=1}^s  \BU_k \BX_k + \bw_k \right),
\label{tropp}
\end{equation}
where $\BA$ is a right-invertible matrix, the  matrices $\{\BU_k\}_{k=1}^s$ are random unitary matrices, the matrices $\{\BX_k\}_{k=1}^s$ are the signals of interest and the vectors $\{\bw_k\}_{k=1}^s$ are noise.
In the  framework by McCoy and Tropp the signals $\BX_k$ are assumed to be highly structured, which includes the scenario where all 
the $\BX_k$ are rank-$r$ matrices.
Thus, in that case the setup~\eqref{tropp} is a special instance of~\eqref{lowranksum}. The focus of~\cite{mccoy2013demixing} is on theoretical
performance bounds of convex approaches for solving~\eqref{tropp}. In a nutshell the authors derive compelling quantitative phase transition bounds which show that demixing can succeed if and only if the dimension of the observation exceeds the total degrees of freedom present in the signals. 

A special case of \eqref{lowranksum}  where the constituents $\{\BX_k\}_{k=1}^s$ are rank-one matrices has been analyzed from a theoretical and a numerical
viewpoint in~\cite{LS15}. 
There, the authors investigate nuclear norm minimization (NNM) for the  demixing problem under structured measurements of practical interest,  and  show that $m\gtrsim s^2n$ number of measurements are sufficient for NNM to reliably extract each constituent $\BX_k$ from the single measurement  vector $\by$. It is also worth noting that an improvement of the theoretical analysis in~\cite{LS15} has been announced in~\cite{stoger2016blind}.

A limitation of the numerical methods proposed in~\cite{LS15} and \cite{mccoy2013demixing} is that the resulting semidefinite program is computationally rather expensive to solve 
for medium-size and large-size problems. Some applications require numerical efficiency, in which case a different numerical approach is needed. The goal of this paper is to develop  numerically efficient algorithms
that can solve the nonlinear low rank matrix demixing problem  without resorting to convex optimization, and meanwhile  to provide 
competitive theoretical recovery guarantees for the proposed algorithms. Closest to the goal of this paper is arguably a very recent paper by Ling and one of the authors~\cite{LS17}.
There, the authors consider the same setup as in~\cite{LS15}, and propose a non-convex regularized gradient-descent based method. The differences to this paper are that (i)~\cite{LS17} is specialized to the joint blind deconvolution-demixing setting and is not designed for~\eqref{lowranksum}, where the unknown matrices $\BX_k$ are general rank-$r$ matrices and the linear operators $\A_k$ are more general sensing matrices; (ii) while both algorithms fall in the realm of non-convex optimization,~\cite{LS17} uses a gradient descent method and this paper uses thresholding-based methods; (iii) the theoretical analysis in this paper does not apply to the joint blind deconvolution-demixing setting in~\cite{LS17}.

\subsection{Outline and Notation}

The remainder of the paper is organized as follows.  
The numerical algorithms  and their theoretical guarantees  for the demixing problem are presented in Section~\ref{s:setup}. We also introduce
an Amalgam form of  restricted isometry property in Section~\ref{s:setup} around which our theoretical analysis revolves.
In Section~\ref{s:numerics}  we test our algorithms on a variety of numerical examples.
The proofs of the theoretical results are presented in Section~\ref{s:proofs}. We conclude this paper with some potential future directions in Section~\ref{s:conclusion}.

Throughout the paper we use the following notational conventions. We denote vectors by bold lowercase letters and matrices by bold uppercase letters.  In particular, we fix $\{\BX_{k}\}_{k=1}^s$ as the target matrices and $\{\BX_{k,l}\}_{k=1}^s$ as the iterates of the algorithms. For a matrix $\BZ$, we use $\ln\BZ\rn$ and $\ln\BZ\rn_F$ to denote its spectral norm and Frobenius norm, respectively. 
For both vectors and matrices, $\bz^T$ and $\BZ^T$ denote their transpose while $\bz^*$ and $\BZ^*$ denote their conjugate transpose. The inner product of two matrices $\BZ_1$ and $\BZ_2$ is defined as $\la \BZ_1,\BZ_2\ra=\trace(\BZ_1^*\BZ_2)$.

 \section{Amalgam-RIP and Algorithms}
\label{s:setup}
\subsection{Problem Setup and Amalgam-RIP}
As outlined in the introduction, we want to solve the following demixing problem:
\begin{align*}
\text{Find all rank-$r$ matrices $\BX_k$, \qquad given $\by = \sum_{k=1}^s\A_k(\BX_k),$} \numberthis\label{eq:linear_measurements}
\end{align*}
where  $\{\A_k\}_{k=1}^s$ is a set of linear operators mapping  $n\times n$ matrices to $m$-dimensional vectors. Let  $\{\BA_{k,p}~|~1\leq k\leq s, 1\leq p\leq m\}$ be a set of measurement matrices. We can write $\A_k(\BZ)$ explicitly as  
\begin{align*}
\A_k(\BZ) =\frac{1}{\sqrt{m}} \begin{bmatrix}
\la \BA_{k,1},\BZ\ra\\
\vdots\\
\la \BA_{k,m},\BZ\ra
\end{bmatrix}.\numberthis\label{eq:linear_operator}
\end{align*}
For conciseness, we only present  our algorithms and theoretical results for $n\times n$ real matrices,  but it is straightforward to modify them for $n_1\times n_2$ complex matrices.



We will propose two iterative hard thresholding algorithms for low rank matrix demixing. A question of central importance  is how many measurements are needed so that the algorithms can successfully extract all the constituents $\{\BX_k\}_{k=1}^s$ from $\by$. Since each $\BX_k$  is determined by $r(2n-r)$ parameters \cite{candesplan2009oracle}, we need at least $m \ge sr(2n-r)$ many measurements for the problem to be well-posed.  One of our goals
is to show that the proposed algorithms can succeed already if the number of measurements is close to this information-theoretic minimum. As is common for hard thresholding algorithms in compressed sensing and low rank matrix recovery, the  convergence analysis here will rely on some form of restricted isometry property (RIP). The notion most fitting to the demixing setting requires the RIP in an amalgamated form.

\begin{definition}[Amalgam-RIP] \label{def:rip}
The set of linear operators $\{\A_k\}_{k=1}^s$ satisfy the Amalgam-RIP (ARIP) with the parameter $\delta_r$ if 
\begin{align*}
(1-\delta_r)\sum_{k=1}^s\ln\BZ_k\rn_F^2\leq \ln\sum_{k=1}^s\A_k(\BZ_k)\rn^2\leq (1+\delta_r)\sum_{k=1}^s\ln\BZ_k\rn_F^2\numberthis\label{eq:ARIP_def}
\end{align*}
holds for all the matrices $\BZ_k,~k=1,\cdots,s$ of rank at most $r$.
\end{definition}

Fix an index $k$ and take $\BZ_{k'}=0$ for all $k'\ne k$. The ARIP implies that 
\begin{align*}
(1-\delta_r)\ln\BZ_k\rn_F^2\leq \ln\A_k(\BZ_k)\rn^2\leq (1+\delta_r)\ln\BZ_k\rn_F^2
\end{align*}
hold for all matrices $\BZ_k$ of rank at most $r$, which is indeed the RIP introduced in \cite{rechtfazelparrilo2010nnm} for low rank matrix recovery.  
When the measurements matrices in \eqref{eq:linear_operator} have i.i.d $\N(0,1)$ entries, we can show that $\{\A_k\}_{k=1}^s$ satisfy the ARIP with overwhelmingly high probability  provided the number of measurements is proportional to the  number of degrees of freedom within the constituents $\{\BX_k\}_{k=1}^s$.
\begin{theorem}\label{thm:ARIP}
If $\{\BA_{k,p}~|~1\leq k\leq s, 1\leq p\leq m\}$ is a set of standard Gaussian matrices, then  $\{\A_k\}_{k=1}^s$ satisfy the ARIP with the parameter $\delta$  with probability at least $1-2\exp(-m\delta^2/64)$  provided 
\begin{align*}
m\geq C\delta^{-2}(2n+1)rs\log(1/\delta),
\end{align*}
where $C>0$ is an absolute numerical constant.
\end{theorem}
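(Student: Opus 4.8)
The plan is to establish the Amalgam-RIP by the classical $\epsilon$-net argument, adapted to the joint norm $\ln x\rn_{\oplus}:=\big(\sum_{k=1}^{s}\ln\BZ_k\rn_F^2\big)^{1/2}$ attached to a tuple $x=(\BZ_1,\ldots,\BZ_s)$. By homogeneity of the inequality in Definition~\ref{def:rip} it suffices to control the deviation $\lab\,\ln\sum_{k=1}^s\A_k(\BZ_k)\rn^2-1\,\rab$ uniformly over the set $T:=\{x:\rank(\BZ_k)\le r\ \forall k,\ \ln x\rn_{\oplus}=1\}$.

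First I would record a pointwise concentration estimate. Fix $x\in T$. The $p$-th coordinate of $\sum_{k=1}^s\A_k(\BZ_k)$ equals $\tfrac{1}{\sqrt m}\sum_{k=1}^s\la\BA_{k,p},\BZ_k\ra$, and since the $\BA_{k,p}$ have i.i.d.\ $\N(0,1)$ entries and are independent across $k$ and $p$, each such sum is a centered Gaussian of variance $\sum_{k}\ln\BZ_k\rn_F^2=1$, with independence across $p$. Hence $\ln\sum_k\A_k(\BZ_k)\rn^2$ is distributed as $\tfrac1m\chi_m^2$, and the Laurent--Massart bound yields $\Pr\big(\lab\ln\sum_k\A_k(\BZ_k)\rn^2-1\rab>t\big)\le 2\exp(-mt^2/8)$ for $0<t<1$.

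Next I would bound the metric entropy of $T$. For a single rank-$r$ matrix the SVD parametrization $\BZ=\BU\BS\BV^T$ gives, as in \cite{candesplan2009oracle}, an $\epsilon$-net of the rank-$r$ Frobenius unit sphere of size at most $(9/\epsilon)^{(2n+1)r}$, counting $nr$ parameters each for the Stiefel factors and $r$ for the singular values. To assemble these into a net of $T$ without paying a spurious $\sqrt s$ factor, I would first cover the profile of component norms $(\ln\BZ_1\rn_F,\ldots,\ln\BZ_s\rn_F)$ on the unit sphere of $\R^s$, and then cover each normalized direction on its rank-$r$ sphere; this produces an $\epsilon$-net $\bar T$ with $\log\lab\bar{T}\rab\lesssim s(2n+1)r\log(C/\epsilon)$. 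A union bound with $t=\delta/2$ then gives $\lab\ln\sum_k\A_k(\bar{\BZ}_k)\rn^2-1\rab\le\delta/2$ simultaneously over $\bar T$ with failure probability $2\lab\bar{T}\rab\exp(-m\delta^2/32)$; choosing $\epsilon\asymp\delta$ makes $\log\lab\bar{T}\rab$ absorbable once $m\ge C\delta^{-2}(2n+1)rs\log(1/\delta)$, and leaves the residual probability $2\exp(-m\delta^2/64)$.

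Finally I would pass from the net to all of $T$. Let $\mu:=\sup_{x\in T}\ln\sum_k\A_k(\BZ_k)\rn$. Given $x\in T$ with nearest net point $\bar x$, each component difference $\BZ_k-\bar{\BZ}_k$ has rank at most $2r$; splitting it into two rank-$r$ matrices with orthogonal row and column supports and invoking only the rank-$r$ meaning of $\mu$ gives $\ln\sum_k\A_k(\BZ_k-\bar{\BZ}_k)\rn\le\sqrt2\,\mu\,\ln x-\bar{x}\rn_{\oplus}\le\sqrt2\,\mu\,\epsilon$. Feeding this into the triangle inequality yields $\mu\le\sqrt{1+\delta/2}+\sqrt2\,\mu\,\epsilon$, which pins $\mu$ close to $1$ for $\epsilon\asymp\delta$, and the matching lower estimate upgrades the net-level bound to $\lab\ln\sum_k\A_k(\BZ_k)\rn^2-1\rab\le\delta$ on all of $T$. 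I expect this extension step to be the main obstacle: one must bound the operators on rank-$2r$ differences using only the rank-$r$ hypothesis (which the orthogonal rank-$r$ splitting accomplishes), and one must aggregate the amalgam degrees of freedom so that the exponent is exactly $s(2n+1)r$ while keeping all constants absolute, so that the tail calibrates to $2\exp(-m\delta^2/64)$.
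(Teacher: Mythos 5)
Your proposal follows essentially the same route as the paper's proof: pointwise chi-squared concentration for a fixed tuple, an $\epsilon$-net of the amalgamated rank-$r$ unit sphere with entropy exponent $(2n+1)rs$ built from SVD parametrizations as in the Cand\`es--Plan argument, a union bound at level $\delta/2$, and the extension to the whole set via the orthogonal splitting of each rank-$2r$ difference into two rank-$r$ pieces together with the self-bounding inequality for the supremum $\kappa_r$. The only cosmetic difference is that the paper nets the concatenated singular-value vector on a single $rs$-dimensional sphere rather than covering a norm profile and per-component directions separately, which yields the same entropy bound.
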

\subsection{Iterative Hard Thresholding Algorithms}
In this subsection, we present two iterative hard thresholding algorithms for the low rank demixing problem. The algorithms are developed by targeting the following rank constraint problem directly
\begin{align*}
\min_{\{\BZ_k\}_{k=1}^s}\ln\by- \sum_{k=1}^s\A_k(\BZ_k)\rn^2\quad\mbox{subject to}\quad\rank(\BZ_k)\leq r.
\end{align*}

The first algorithm, simply referred to as iterative hard thresholding (IHT), is presented in Algorithm~\ref{alg:iht}. In each iteration, the algorithm first computes the residual  and then updates each constituent separately by projected gradient descent. The search direction with respect to each constituent is computed as the negative gradient descent direction of the objective function when the other constituents are fixed. The hard thresholding operator $\H_r$ in  the algorithm returns the best rank $r$ approximation of a matrix. The stepsize within the line search can be either fixed or computed adaptively. In Algorithm~\ref{alg:iht}, we compute  it as the steepest descent stepsize along the projected gradient descent direction $\P_{T_{k,l}}( \BG_{k,l})$, where $T_{k,l}$ is the tangent space of the rank $r$ matrix manifold at the  current estimate $\BX_{k,l}$. Let  $\BX_{k,l}=\BU_{k,l}\BS_{k,l}\BV_{k,l}^*$ be the singular value  decomposition of $\BX_{k,l}$. The tangent space $T_{k,l}$ consists of matrices which share the same column or row subspaces with $\BX_{k,l}$ \cite{AbMaSe2008manifold},
\begin{align*}
T_{k,l} = \{\BU_{k,l}\BZ_1^*+\BZ_2\BV_{k,l}^*~|~\BZ_1\in\R^{n\times r},~\BZ_2\in\R^{n\times r}\}.\numberthis\label{eq:tangent}
\end{align*}
Note that the inner loop within the algorithm is fully parallel since we update each constituent separately.
\begin{algorithm}[!ht]
\caption{Iterative Hard Thresholding for Low Rank Demixing}\label{alg:iht}
\begin{algorithmic}
\STATE\textbf{Initialization}: $\BX_{k,0}=\BU_{k,0}\BS_{k,0}\BV_{k,0}^* = \hard_r(\A_k^*(\by))$
\FOR{$l=0,1,\cdots$}
\STATE  $\br_{l}=\by-\sum_{k=1}^s\A_k(\BX_{k,l})$
\FOR[\emph{Fully Parallel}]{$k=1,\cdots,s$} 
\STATE 1. $\BG_{k,l}=\A_k^*(\br_l)$
\STATE 2. $\alpha_{k,l}= \frac{\|\P_{T_{k,l}}( \BG_{k,l})\|_F^2}{\| \A_k\P_{T_{k,l}}( \BG_{k,l})\|_2^2}$
\STATE 3. $\BX_{k,l+1}=\hard_r(\BX_{k,l}+ \alpha_{k,l}\BG_{k,l})$
\ENDFOR
\ENDFOR
\end{algorithmic}
\end{algorithm} 

\begin{theorem}\label{thm:iht}
Assume the linear operators $\{\A_k\}_{k=1}^s$ satisfy the ARIP with the parameter $\delta_{3r}$. 
Define
\begin{align*}
\gamma_{1} = \frac{4\delta_{3r}}{1-\delta_{3r}}.
\end{align*}
If $\gamma_{1}<1$, then the iterates of IHT satisfy 
\begin{align*}\ln
\begin{bmatrix}
\BX_{1,l+1}-\BX_1\\
\vdots\\
\BX_{s,l+1}-\BX_s
\end{bmatrix}\rn_F
\leq 
\gamma_{1} \ln
\begin{bmatrix}
\BX_{1,l}-\BX_1\\
\vdots\\
\BX_{s,l}-\BX_s
\end{bmatrix}\rn_F.
\end{align*}
It follows immediately  that the iterates of IHT converge linearly to the underlying constituents provided  $\delta_{3r}<1/5$. 
\end{theorem}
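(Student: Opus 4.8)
The plan is to collapse the $s$ separate block updates into a single contraction estimate in the \emph{amalgamated} (stacked) space, where the ARIP plays exactly the role ordinary RIP plays in single-matrix IHT. Throughout I write $\BE_{k,l}:=\BX_{k,l}-\BX_k$ for the per-block error and let $\mathcal{B}$ denote the amalgamated operator $\mathcal{B}(\{\BZ_k\}_k):=\sum_k\A_k(\BZ_k)$, so the ARIP says precisely that $\mathcal{B}$ is a near-isometry (constant $\delta_r$) on tuples of rank-$\le r$ blocks measured in the combined norm $\sum_k\|\cdot\|_F^2$.

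First I would extract the basic inequality from the thresholding step. Since $\BX_{k,l+1}=\H_r(\BW_{k,l})$ with $\BW_{k,l}=\BX_{k,l}+\alpha_{k,l}\BG_{k,l}$ is the best rank-$r$ approximation of $\BW_{k,l}$, and $\BX_k$ is itself rank $r$, the inequality $\|\BW_{k,l}-\BX_{k,l+1}\|_F\le\|\BW_{k,l}-\BX_k\|_F$ expands, after cancelling the common term, to $\|\BE_{k,l+1}\|_F^2\le 2\langle\BE_{k,l+1},\,\BE_{k,l}+\alpha_{k,l}\BG_{k,l}\rangle$. Using $\by=\sum_k\A_k(\BX_k)$ gives $\br_l=-\mathcal{B}(\{\BE_{k,l}\}_k)$ and $\BG_{k,l}=\A_k^*\br_l$, so $\langle\BE_{k,l+1},\alpha_{k,l}\BG_{k,l}\rangle=-\alpha_{k,l}\langle\A_k(\BE_{k,l+1}),\mathcal{B}(\{\BE_{k',l}\}_{k'})\rangle$. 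Summing over $k$ and collecting the block stepsizes into the block-diagonal scaling $\mathcal{D}$ (which multiplies block $k$ by $\alpha_{k,l}$) yields the compact amalgamated estimate
$$\sum_k\|\BE_{k,l+1}\|_F^2\;\le\;2\big\langle\{\BE_{k,l+1}\}_k,\;(\mathcal{I}-\mathcal{D}\,\mathcal{B}^*\mathcal{B})\{\BE_{k,l}\}_k\big\rangle.$$
Next I restrict everything to low rank: for each $k$ let $\mathcal{S}_k$ be the matrices whose column and row spaces lie in the spans of those of $\BX_{k,l},\BX_{k,l+1},\BX_k$, so every element of $\mathcal{S}_k$ has rank $\le 3r$ and, crucially, both $\BE_{k,l}$ and $\BE_{k,l+1}$ lie in $\mathcal{S}_k$. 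With $\mathcal{S}=\prod_k\mathcal{S}_k$ and the self-adjoint projection $\P_{\mathcal{S}}$ inserted, the ARIP with constant $\delta_{3r}$ forces $\P_{\mathcal{S}}\mathcal{B}^*\mathcal{B}\P_{\mathcal{S}}$ to have all eigenvalues on $\mathcal{S}$ in $[1-\delta_{3r},1+\delta_{3r}]$. Separately, since $\P_{T_{k,l}}(\BG_{k,l})$ has rank $\le 2r$, the single-operator consequence of the ARIP pins the exact-line-search stepsize to $\tfrac1{1+\delta_{2r}}\le\alpha_{k,l}\le\tfrac1{1-\delta_{2r}}$, hence $|1-\alpha_{k,l}|\le\tfrac{\delta_{2r}}{1-\delta_{2r}}$ and $\|\mathcal{D}\|\le\tfrac1{1-\delta_{2r}}$.

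The main obstacle is that the per-constituent stepsizes $\alpha_{k,l}$ differ, so $\mathcal{D}\mathcal{B}^*\mathcal{B}$ is not self-adjoint and does not factor through the clean operator $\mathcal{B}^*\mathcal{B}$ the way it would for a single common stepsize; were all $\alpha_{k,l}$ equal the estimate would follow at once. I would resolve this with the additive splitting $\mathcal{I}-\mathcal{D}\mathcal{B}^*\mathcal{B}=(\mathcal{I}-\mathcal{D})+\mathcal{D}(\mathcal{I}-\mathcal{B}^*\mathcal{B})$, observing that $\mathcal{D}$ commutes with $\P_{\mathcal{S}}$ because one scales whole blocks while the other acts inside blocks. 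The first piece contributes at most $\tfrac{\delta_{2r}}{1-\delta_{2r}}$ by the stepsize bound together with Cauchy--Schwarz across the blocks, while the second contributes at most $\tfrac1{1-\delta_{2r}}\cdot\delta_{3r}$ from the ARIP eigenvalue bound; with $\delta_{2r}\le\delta_{3r}$ the two combine to the operator bound $\tfrac{2\delta_{3r}}{1-\delta_{3r}}$. Feeding this into the amalgamated estimate and cancelling one factor of the error norm gives $\big(\sum_k\|\BE_{k,l+1}\|_F^2\big)^{1/2}\le\tfrac{4\delta_{3r}}{1-\delta_{3r}}\big(\sum_k\|\BE_{k,l}\|_F^2\big)^{1/2}=\gamma_1\big(\sum_k\|\BE_{k,l}\|_F^2\big)^{1/2}$, which is exactly the claimed contraction. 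The closing remark is then immediate, since $\gamma_1<1$ is equivalent to $\delta_{3r}<1/5$ and a contraction factor below one yields geometric (linear) convergence.
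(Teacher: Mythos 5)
Your proof is correct and follows essentially the same route as the paper's: stack the per-block errors into the amalgamated space, restrict to a rank-$3r$ subspace containing $\BX_{k,l}$, $\BX_{k,l+1}$, $\BX_k$, bound the resulting iteration operator by $\tfrac{2\delta_{3r}}{1-\delta_{3r}}$ using the ARIP together with the stepsize bounds $\tfrac{1}{1+\delta_{2r}}\le\alpha_{k,l}\le\tfrac{1}{1-\delta_{2r}}$, and pick up the factor $2$ from the best rank-$r$ approximation property. The only cosmetic differences are that you extract the factor $2$ via the quadratic expansion $\|\BX_{k,l+1}-\BX_k\|_F^2\le 2\langle \BX_{k,l+1}-\BX_k,\BW_{k,l}-\BX_k\rangle$ rather than the paper's projection-plus-triangle-inequality argument, and you split $\mathcal{I}-\mathcal{D}\mathcal{B}^*\mathcal{B}=(\mathcal{I}-\mathcal{D})+\mathcal{D}(\mathcal{I}-\mathcal{B}^*\mathcal{B})$ whereas the paper perturbs around the unit-stepsize operator; both yield the same constant $\gamma_1$.
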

If $\{\A_k\}_{k=1}^s$ consist of standard Gaussian measurement matrices, then Theorem~\ref{thm:iht} together with Theorem~\ref{thm:ARIP}  implies that the number of necessary measurements for  IHT to achieve successful recovery is $O(s nr)$ which is optimal.  
In addition, we can establish the robustness of IHT against additive noise by considering the model $\by = \sum_{k=1}^s\A_k(\BX_k)+\be$, where $\be$ represents a noise term. 
\begin{theorem}\label{thm:iht_noise}
Assume $\by = \sum_{k=1}^s\A_k(\BX_k)+\be$ and the linear operators $\{\A_k\}_{k=1}^s$ satisfy the ARIP with parameter $\delta_{3r}$.   If $\gamma_{1}<1$, then the iterates of IHT satisfy
\begin{align*}
\ln
\begin{bmatrix}
\BX_{1,l}-\BX_1\\
\vdots\\
\BX_{s,l}-\BX_s
\end{bmatrix}\rn_F\leq\gamma_1^l\ln
\begin{bmatrix}
\BX_{1,0}-\BX_1\\
\vdots\\
\BX_{s,0}-\BX_s
\end{bmatrix}\rn_F+\frac{\xi}{1-\gamma_1}\ln\be\rn,
\end{align*}
where
\begin{align*}
\xi=\frac{2\sqrt{1+\delta_{3r}}}{1-\delta_{2r}}.
\end{align*}
\end{theorem}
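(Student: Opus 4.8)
The plan is to retrace the one-step contraction behind Theorem~\ref{thm:iht} while carrying the noise vector $\be$ through every estimate, so that it resurfaces as a single additive perturbation. Write $\BE_l$ for the stacked error $[\BX_{1,l}-\BX_1;\dots;\BX_{s,l}-\BX_s]$, so that $\ln\BE_l\rn_F^2=\sum_{k=1}^s\ln\BX_{k,l}-\BX_k\rn_F^2$. In the noisy model the residual splits as $\br_l=\sum_{k=1}^s\A_k(\BX_k-\BX_{k,l})+\be$, so each gradient $\BG_{k,l}=\A_k^*(\br_l)$ decomposes into a signal part $\A_k^*(\sum_{k'}\A_{k'}(\BX_{k'}-\BX_{k',l}))$ and a noise part $\A_k^*(\be)$. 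First I would run the noiseless argument on the signal part verbatim; by Theorem~\ref{thm:iht} it produces the contraction factor $\gamma_1$ acting on $\ln\BE_l\rn_F$. The entire remaining task is to show that the noise part contributes at most $\xi\ln\be\rn$ per iteration.

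For the noise contribution I would use two facts together with the amalgam structure. First, the best rank-$r$ approximation property of $\hard_r$, through the standard inner-product identity for $\ln\BE_{l+1}\rn_F^2$, inserts the noise with an overall coefficient $2$ (the same mechanism that yields the factor $4$ in $\gamma_1$) and confines the error directions $\BD_k:=\BX_{k,l+1}-\BX_k$ to the rank-$\le 3r$ tangent spaces $\widetilde T_{k,l}$ spanned by the column and row spaces of $\BX_{k,l}$, $\BX_{k,l+1}$, and $\BX_k$. Second, the steepest-descent stepsize satisfies $\alpha_{k,l}\le\bar\alpha:=\frac{1}{1-\delta_{2r}}$, since its denominator $\ln\A_k\P_{T_{k,l}}(\BG_{k,l})\rn_2^2$ is bounded below by $(1-\delta_{2r})\ln\P_{T_{k,l}}(\BG_{k,l})\rn_F^2$ through the single-operator RIP on the rank-$\le 2r$ space $T_{k,l}$. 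The crux is then to bound the noise \emph{jointly} rather than block by block, as
\begin{align*}
\Big|\sum_{k=1}^s\alpha_{k,l}\langle\BD_k,\P_{\widetilde T_{k,l}}\A_k^*(\be)\rangle\Big|\le\bar\alpha\,\ln\BE_{l+1}\rn_F\Big(\sum_{k=1}^s\ln\P_{\widetilde T_{k,l}}\A_k^*(\be)\rn_F^2\Big)^{1/2}\le\bar\alpha\sqrt{1+\delta_{3r}}\,\ln\BE_{l+1}\rn_F\,\ln\be\rn,
\end{align*}
where the first step is Cauchy--Schwarz and the second applies the amalgam ARIP to the rank-$\le 3r$ tuple $\{\P_{\widetilde T_{k,l}}\A_k^*(\be)\}_{k=1}^s$ via the duality $\sum_k\ln\P_{\widetilde T_{k,l}}\A_k^*(\be)\rn_F^2=\langle\sum_k\A_k(\P_{\widetilde T_{k,l}}\A_k^*(\be)),\be\rangle$. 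It is exactly the amalgamated form of the RIP that lets the single noise vector $\be$ be absorbed across all $s$ blocks at once, so that no spurious $\sqrt{s}$ factor appears. Dividing the resulting quadratic bound for $\ln\BE_{l+1}\rn_F^2$ by $\ln\BE_{l+1}\rn_F$ and combining the coefficient $2$ with $\bar\alpha$ and $\sqrt{1+\delta_{3r}}$ gives precisely $\xi=\frac{2\sqrt{1+\delta_{3r}}}{1-\delta_{2r}}$.

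Assembling the signal and noise parts yields the one-step recursion $\ln\BE_{l+1}\rn_F\le\gamma_1\ln\BE_l\rn_F+\xi\ln\be\rn$. The final step is to unroll it: iterating $l$ times and using $\sum_{j=0}^{l-1}\gamma_1^j<\frac{1}{1-\gamma_1}$ for $\gamma_1<1$ turns the per-step noise into the geometric sum $\frac{\xi}{1-\gamma_1}\ln\be\rn$, which is the stated bound.

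I expect the main obstacle to be the bookkeeping that keeps the noise cleanly decoupled from the signal contraction, namely verifying that every cross-term estimate producing $\gamma_1$ in the noiseless proof still closes once the extra $\A_k^*(\be)$ term is present, with no contribution beyond the additive $\xi\ln\be\rn$. The second delicate point is to ensure that the \emph{joint} estimate above, rather than a block-by-block one, is used throughout, so that $\xi$ carries no hidden dependence on the number of constituents $s$; this is precisely where the amalgam RIP, as opposed to the $s$ individual RIPs, is indispensable.
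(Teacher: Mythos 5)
Your proposal is correct and follows essentially the same route as the paper: split the one-step error into the noiseless contraction term (bounded by $\gamma_1\ln\BE_l\rn_F$ exactly as in Theorem~\ref{thm:iht}) plus a noise term, bound the noise term using $\alpha_{k,l}\le 1/(1-\delta_{2r})$ together with the amalgam ARIP on the stacked rank-$\le 3r$ projections of $\A_k^*(\be)$ to obtain the $s$-free constant $\xi=\frac{2\sqrt{1+\delta_{3r}}}{1-\delta_{2r}}$, and then unroll the recursion with the geometric series. The only cosmetic difference is that you phrase the noise estimate as a Cauchy--Schwarz inner product with $\BE_{l+1}$ followed by division, whereas the paper bounds the dual norm of the stacked noise block directly; the two are equivalent.
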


The application of the hard thresholding operator in Algorithm~\ref{alg:iht} requires the singular value decomposition (SVD) of  an $n\times n$ matrix in each iteration, which is computationally expensive for unstructured matrices. Inspired by the work of Riemannian optimization for low rank matrix reconstruction in \cite{bart2012riemannian,CGIHT_dense, CGIHT_entry}, we propose to accelerate IHT by updating each constituent along a projected gradient descent direction, see fast iterative hard thresholding (FIHT) described in Algorithm~\ref{alg:fiht}. The key difference between Algorithms~\ref{alg:iht} and \ref{alg:fiht} lies in Step~3.  Instead of updating $\BX_{k,l}$ along the gradient descent direction $\BG_{k,l}$ as in IHT, FIHT updates $\BX_{k,l}$ along the projected gradient descent direction $\P_{T_{k,l}}( \BG_{k,l})$, where $T_{k,l}$ is the tangent space defined in \eqref{eq:tangent}, and then followed by the hard thresholding operation.

\begin{algorithm}[!ht]
\caption{Fast Iterative Hard Thresholding  for Low Rank Demixing}\label{alg:fiht}
\begin{algorithmic}
\STATE\textbf{Initialization}: $\BX_{k,0}=\BU_{k,0}\BS_{k,0}\BV_{k,0}^* = \hard_r(\A_k^*(\by))$
\FOR{$l=0,1,\cdots$}
\STATE  $\br_{l}=\by-\sum_{k=1}^s\A_k(\BX_{k,l})$
\FOR[\emph{Fully Parallel}]{$k=1,\cdots,s$} 
\STATE 1. $\BG_{k,l}=\A_k^*(\br_l)$
\STATE 2. $\alpha_{k,l}= \frac{\|\P_{T_{k,l}}( \BG_{k,l})\|_F^2}{\| \A_k\P_{T_{k,l}}( \BG_{k,l})\|_2^2}$
\STATE 3. $\BX_{k,l+1}=\hard_r(\BX_{k,l}+\alpha_{k,l}\P_{T_{k,l}}( \BG_{k,l}))$
\ENDFOR
\ENDFOR
\end{algorithmic}
\end{algorithm} 

Let $\BW_{k,l}=\BX_{k,l}+\alpha_{k,l}\P_{T_{k,l}}( \BG_{k,l})$. One can easily observe that $\BW_{k,l}\in T_{k,l}$ and $\rank(\BW_{k,l})\leq 2r$. Moreover, $\BW_{k,l}$ obeys the following decomposition \cite{CGIHT_dense}:
\begin{align*}
\BW_{k,l} = \begin{bmatrix}\BU_{k,l} & \BQ_1\end{bmatrix}\BM_{k,l}\begin{bmatrix}
\BV_{k,l}^*\\\BQ_2^*
\end{bmatrix}
\end{align*}
where $\BQ_1$ and $\BQ_2$ are two $n\times r$ orthonormal matrices such that $\BQ_1\perp\BU_{k,l}$ and $\BQ_2\perp\BV_{k,l}$,  and $\BM_{k,l}$ is a $2r\times 2r$ matrix of the form
\begin{align*}
\BM_{k,l} = \begin{bmatrix}
\BS_{k,l}+\BXi_{k,l} & \BR_1^*\\
\BR_2^* & 0
\end{bmatrix}.
\end{align*} 
Consequently, the SVD of $\BW_{k,l}$ can be obtained from the SVD of the $2r\times 2r$ matrix $\BM_{k,l}$. Therefore, the introduction of the extra projection $\P_{T_{k,l}}$ for the search direction can reduce the computational complexity of the partial SVD of $\BW_{k,l}$ from $O(n^2r)$ flops to $O(nr^2+r^3)$ flops. The recovery guarantee of FIHT can also be established in terms of the ARIP.

\begin{theorem}\label{thm:fiht}
Assume the linear operators $\{\A_k\}_{k=1}^s$ satisfy the ARIP with the parameter $\delta_{3r}$. 
Define 
\begin{align*}
\gamma_2 = 2\lb\frac{2\delta_{2r}}{1-\delta_{2r}}+\frac{\delta_{3r}}{1-\delta_{2r}}+2\delta_{3r}\sqrt{rs}\frac{\sigmax}{\sigmin}\rb,
\end{align*}
where $\sigmax:=\max_k\sigmax(\BX_k)$ and $\sigmin:=\max_k\sigmin(\BX_k)$.
If $\gamma_2<1$, then 
the iterates of Algorithm~\ref{alg:fiht} satisfy
\begin{align*}
\ln \begin{bmatrix}
\BX_{1,l+1}-\BX_1\\
\vdots\\
\BX_{s,l+1}-\BX_s
\end{bmatrix}\rn_F\leq\gamma_2\ln \begin{bmatrix}
\BX_{1,l}-\BX_1\\
\vdots\\
\BX_{s,l}-\BX_s
\end{bmatrix}\rn_F.
\end{align*}
It follows that  the iterates of FIHT converge linearly to the underlying constituents provided  
\begin{align*}
\delta_{3r}\lesssim\frac{\sigmin}{\sigmax}\frac{1}{\sqrt{rs}}.
\end{align*}
\end{theorem}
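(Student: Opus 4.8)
The plan is to derive a one-step bound for the stacked error, reduce it through the best-rank-$r$ property of $\hard_r$, and then split the per-constituent error into three pieces: an in-tangent-space gradient step, a cross-talk term coupling distinct constituents, and a second-order curvature term. The first two are linear in the error and are controlled directly by the ARIP, whereas the third is genuinely quadratic and is where the $\sqrt{rs}\,\sigmax/\sigmin$ factor originates. Write $\BE_{k,l}=\BX_{k,l}-\BX_k$, set $E_l=(\sum_k\|\BE_{k,l}\|_F^2)^{1/2}$, and let $\BW_{k,l}=\BX_{k,l}+\alpha_{k,l}\P_{T_{k,l}}(\BG_{k,l})$ so that $\BX_{k,l+1}=\hard_r(\BW_{k,l})$. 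Since $\BX_k$ has rank $r$ and $\BX_{k,l+1}$ is a best rank-$r$ approximation of $\BW_{k,l}$, we have $\|\BX_{k,l+1}-\BW_{k,l}\|_F\le\|\BX_k-\BW_{k,l}\|_F$ and hence $\|\BX_{k,l+1}-\BX_k\|_F\le 2\|\BW_{k,l}-\BX_k\|_F$; squaring and summing over $k$ isolates the leading factor $2$ in $\gamma_2$ and reduces the theorem to bounding $\sum_k\|\BW_{k,l}-\BX_k\|_F^2$ by $(\gamma_2/2)^2E_l^2$.

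Next I would expand the residual. Because $\br_l=-\sum_j\A_j(\BE_{j,l})$ and $\BG_{k,l}=\A_k^*(\br_l)$,
\begin{align*}
\BW_{k,l}-\BX_k=\BE_{k,l}-\alpha_{k,l}\P_{T_{k,l}}\A_k^*\A_k(\BE_{k,l})-\alpha_{k,l}\sum_{j\ne k}\P_{T_{k,l}}\A_k^*\A_j(\BE_{j,l}).
\end{align*}
Splitting $\BE_{k,l}=\P_{T_{k,l}}\BE_{k,l}+(\I-\P_{T_{k,l}})\BE_{k,l}$ and using $\BX_{k,l}\in T_{k,l}$ to rewrite $(\I-\P_{T_{k,l}})\BE_{k,l}=-(\I-\P_{T_{k,l}})\BX_k$, the right-hand side becomes the sum of (i) the tangent descent $[\,\I-\alpha_{k,l}\P_{T_{k,l}}\A_k^*\A_k\,]\P_{T_{k,l}}\BE_{k,l}$, (ii) curvature terms built from $(\I-\P_{T_{k,l}})\BX_k$, and (iii) the cross-talk $-\alpha_{k,l}\sum_{j\ne k}\P_{T_{k,l}}\A_k^*\A_j(\BE_{j,l})$.

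Then I would bound each piece. For (i): matrices in $T_{k,l}$ have rank $\le 2r$, so the ARIP confines the eigenvalues of $\P_{T_{k,l}}\A_k^*\A_k\P_{T_{k,l}}$ on $T_{k,l}$ to $[1-\delta_{2r},1+\delta_{2r}]$, and the steepest-descent stepsize to $[(1+\delta_{2r})^{-1},(1-\delta_{2r})^{-1}]$; consequently the operator norm of $\I-\alpha_{k,l}\P_{T_{k,l}}\A_k^*\A_k\P_{T_{k,l}}$ on $T_{k,l}$ is at most $\frac{2\delta_{2r}}{1-\delta_{2r}}$, which yields the first summand of $\gamma_2$. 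For (iii) I would polarize the ARIP inequality to obtain the near-orthogonality estimate $|\la\A_k(\BY),\A_j(\BZ)\ra|\le\delta_{2r}\|\BY\|_F\|\BZ\|_F$ for $j\ne k$ and $\BY,\BZ$ of rank $\le 2r$; applied with $\BY\in T_{k,l}$ and aggregated over $k$ by the stacked ARIP (so that no spurious factor of $s$ appears), this controls the linear cross-talk and, after bounding $\delta_{2r}\le\delta_{3r}$, gives the middle summand $\frac{\delta_{3r}}{1-\delta_{2r}}$.

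The step I expect to be the main obstacle is making the curvature piece (ii) fit a \emph{linear} recursion. The key structural fact is $(\I-\P_{T_{k,l}})\BX_k=(\I-\BU_{k,l}\BU_{k,l}^*)\BX_k(\I-\BV_{k,l}\BV_{k,l}^*)$, which has rank $\le r$ and satisfies the standard tangent-space estimate $\|(\I-\P_{T_{k,l}})\BX_k\|_F\le\|\BE_{k,l}\|_F^2/\sigmin(\BX_k)$; the rank-$\le r$ property keeps every mixed pairing against $T_{k,l}$ at order $3r$, which is why $\delta_{3r}$ (and not a higher order) appears. This bound, however, is quadratic, so one factor of $\|\BE_{k,l}\|_F$ must be absorbed into the coefficient. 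This forces an a priori bound on the error of order $\delta_{3r}\sqrt{rs}\,\sigmax$ (obtained by analyzing the spectral initialization $\hard_r(\A_k^*(\by))$, whose dominant error is the cross-talk $\sum_{j\ne k}\A_k^*\A_j(\BX_j)$ estimated via the stacked ARIP), carried forward as an inductive invariant: since $\gamma_2<1$, the contraction reproduces the invariant at the next step. Substituting this bound into $\|\BE_{k,l}\|_F^2/\sigmin(\BX_k)$ converts the quadratic curvature estimate into a linear one with coefficient of order $\delta_{3r}\sqrt{rs}\,\sigmax/\sigmin$, producing the last summand of $\gamma_2$. Propagating this invariant consistently—so that the one-step factor both uses and preserves it—is the delicate part; once it is in place, collecting (i)--(iii) together with the leading factor $2$ reproduces $\gamma_2$ and the convergence condition $\delta_{3r}\lesssim\frac{\sigmin}{\sigmax}\frac{1}{\sqrt{rs}}$.
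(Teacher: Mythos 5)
Your proposal follows essentially the same route as the paper's proof: the leading factor $2$ from the best-rank-$r$ projection, a three-way split of $\BW_{k,l}-\BX_k$ into a tangent-space linear piece, a cross/off-tangent piece, and a curvature piece, each controlled by the ARIP (including the stepsize bounds $\alpha_{k,l}\in[(1+\delta_{2r})^{-1},(1-\delta_{2r})^{-1}]$ and the polarization lemma), and the conversion of the quadratic estimate $\ln(\I-\P_{T_{k,l}})\BX_k\rn_F\le\ln\BX_{k,l}-\BX_k\rn_F^2/\sigma_{\min}(\BX_k)$ into a linear one via the spectral-initialization bound $E_0\le 2\delta_{3r}\sqrt{rs}\,\sigma_{\max}$ propagated inductively. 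The only difference is bookkeeping: the paper bounds the entire tangent-projected block operator (diagonal and off-diagonal together) by $\tfrac{2\delta_{2r}}{1-\delta_{2r}}$ and obtains $\tfrac{\delta_{3r}}{1-\delta_{2r}}$ from the terms where the errors are hit by $\I-\P_{T_{j,l}}$, whereas you attribute the middle summand to the off-diagonal cross-talk alone, which is harder to bound in isolation; a careful execution of your split would still yield a contraction factor of the same form.
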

Assume $\{\A_k\}_{k=1}^s$ consist of standard Gaussian matrices.  In contrast to the $O(srn)$ necessary measurements for IHT to achieve successful recovery, Theorem~\ref{thm:fiht} implies that FIHT requires $O((\sigmax/\sigmin)^2s^2nr^2)$ measurements which is not optimal.  However, numerical simulations in Section~\ref{s:numerics} suggest that FIHT needs  fewer measurements than IHT to be able to successfully reconstruct $s$ rank-$r$ matrices. 

\section{Numerical Simulations and Applications}
\label{s:numerics}
We evaluate the empirical performance of IHT and FIHT on simulated problems as well as application examples from quantum states demxing and the Internet of Things. In the implementation IHT and FIHT are terminated when a maximum of $500$ iterations is met or the relative residual is small,
\begin{align*}
\frac{\ln\by-\sum_{k=1}^s\A_k(\BX_{k,l})\rn}{\ln\by\rn}\leq 10^{-4}.
\end{align*}
They are considered to have successfully recovered a set of matrices $\{\BX_k\}_{k=1}^s$  if the returned approximations $\{\BX_{k,l}\}_{k=1}^s$ satisfy
\begin{align*}
\frac{\sqrt{\sum_{k=1}^s\ln\BX_{k,l}-\BX_k\rn_F^2}}{\sqrt{\sum_{k=1}^s\ln\BX_k\rn_F^2}}\leq 10^{-2}.
\end{align*}
All the random tests are repeated {\em ten} times in this section.
\subsection{Phase Transitions under Gaussian Measurements}\label{sec:num_gaussian}
The ARIP-based theoretical recovery guarantees in Theorems~\ref{thm:iht} and \ref{thm:fiht} are worst-case analysis which are uniform over all rank $r$ matrices. However, these conditions are highly pessimistic when compared with average-case empirical observations. In this subsection we investigate the empirical  recovery performance of IHT and FIHT under Gaussian measurements. 
The tests are conducted on $n\times n$ rank-$r$ matrices with $n=50$ and $r=5$, and  the number of constituents $s$ varies  from $1$ to $7$. For a fixed value of constituents,  
a number of equispaced values of $m$ are tested.

We first test IHT and FIHT on well-conditioned matrices which are formed via $\BX_k=\BL_k\BR_k^*$, where $\BL_k\in\R^{n\times r}$, $\BR_k\in\R^{n\times r}$, and their entries are drawn  i.i.d. from the standard normal distribution. The phase transitions of IHT and FIHT on well-conditioned matrices are presented in Figures~\ref{fig:IHT_Gaussian} and \ref{fig:FIHT_Gaussian}, respectively. Figure~\ref{fig:IHT_Gaussian} shows a linear correlation between the number of measurements $m$ and the number of constituents  $s$ for the successful recovery of IHT. Though the recovery guarantee of FIHT in Theorem~\ref{thm:fiht} is worse than that of IHT in Theorem~\ref{thm:iht}, Figure~\ref{fig:FIHT_Gaussian} shows that FIHT requires fewer measurements than IHT to achieve successful recovery with high probability. In particular, $m\approx 3.3 sr(2n-r)$ is sufficient for FIHT to successfully reconstruct a set of $s$ low rank matrices for all the tested values of $s$.

\begin{figure}[!ht]
\centering
\subfloat[]{\includegraphics[width=0.35\textwidth]{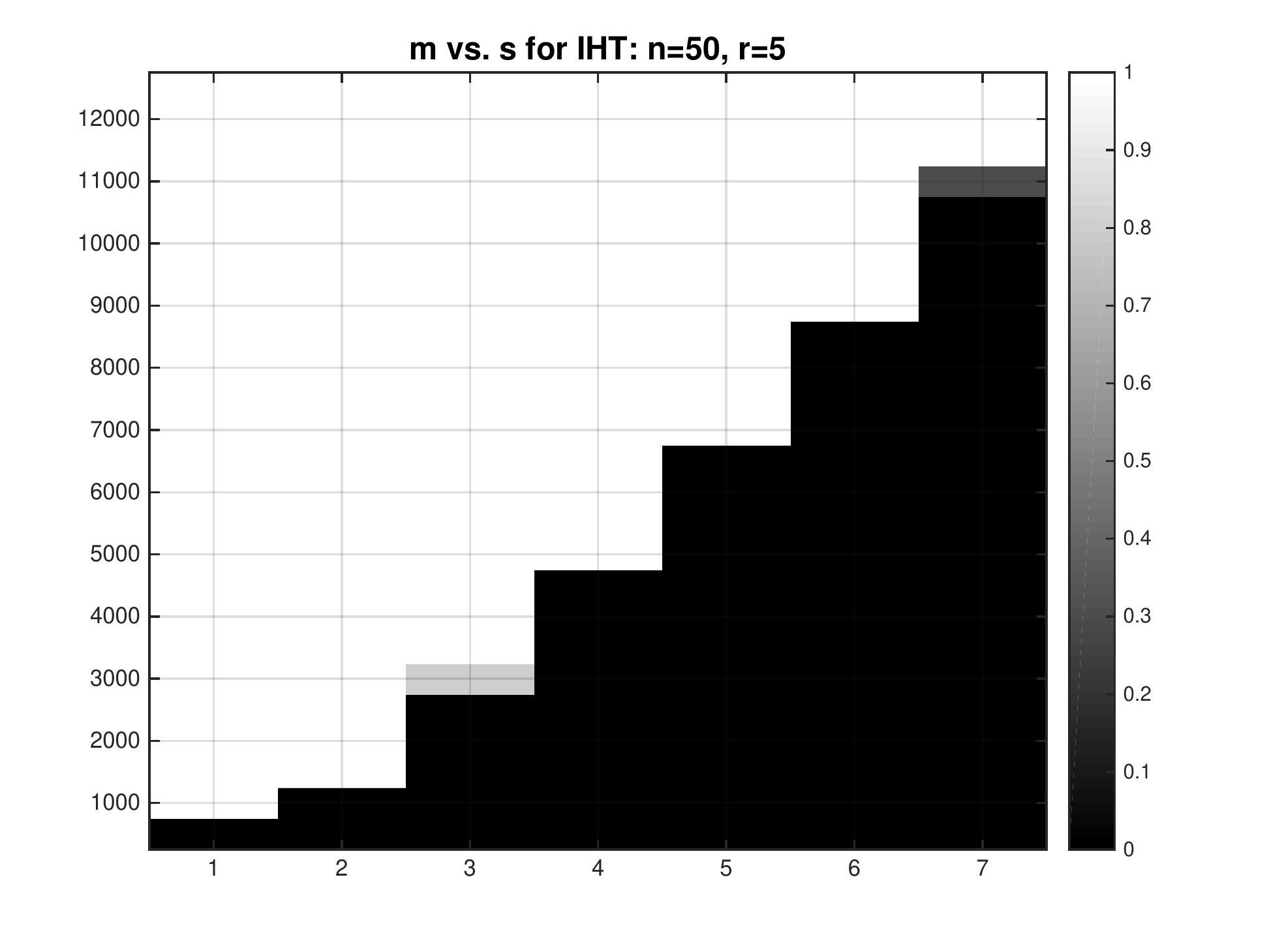}\label{fig:IHT_Gaussian}}
\subfloat[]{\includegraphics[width=0.35\textwidth]{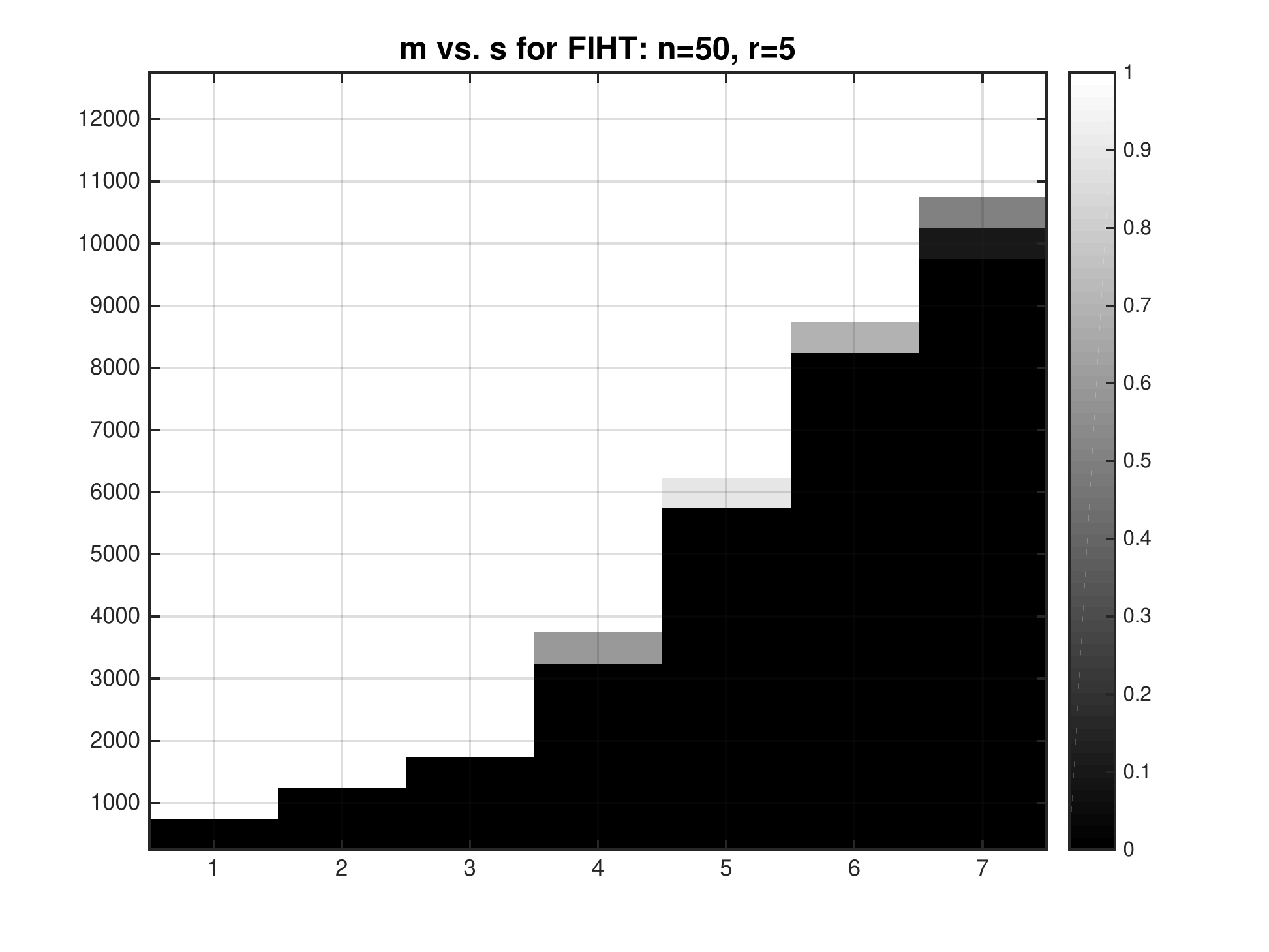}\label{fig:FIHT_Gaussian}}
\subfloat[]{\includegraphics[width=0.35\textwidth]{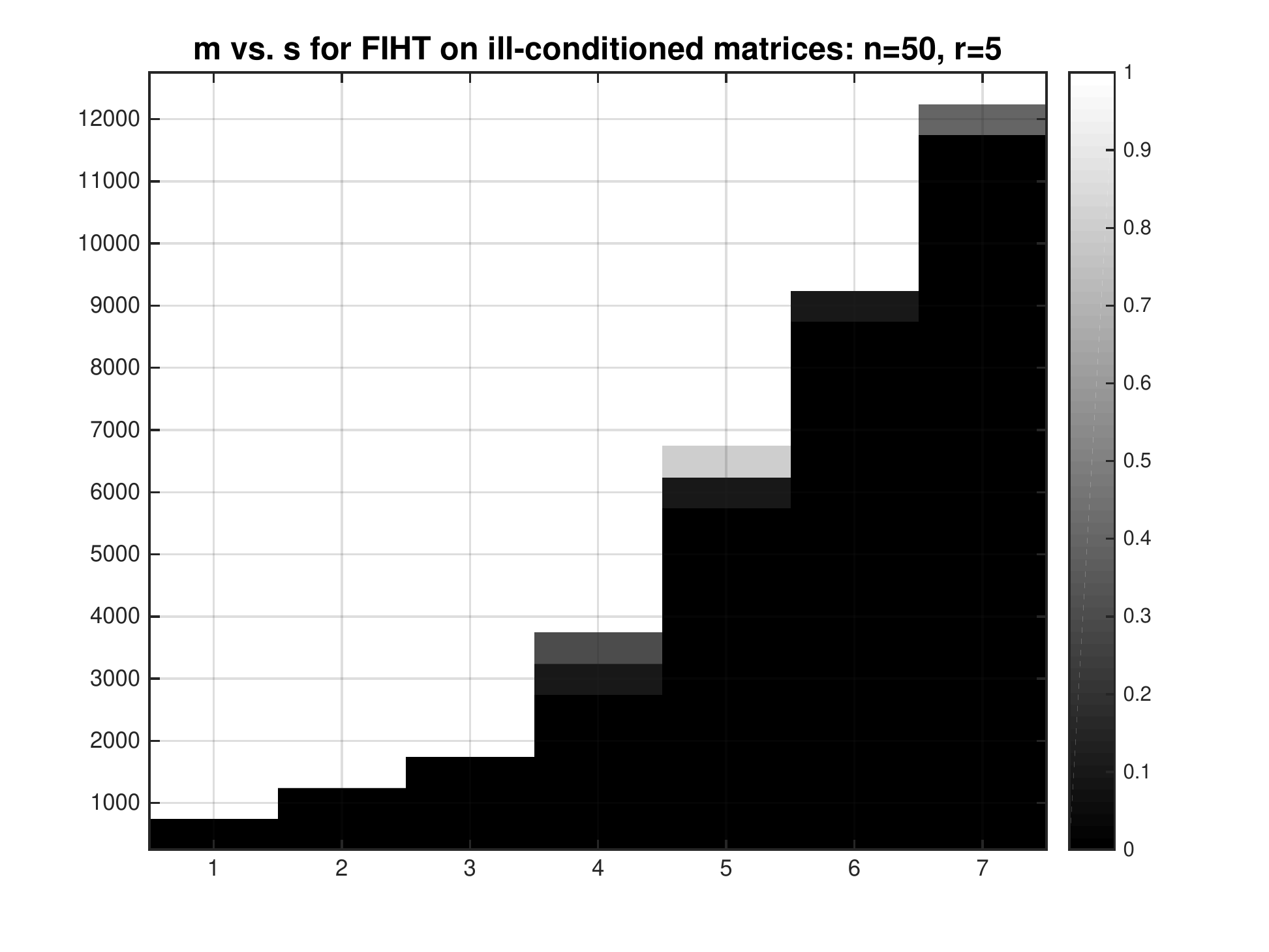}\label{fig:FIHT_Gaussian_Ill}}
\caption{Empirical phase transitions of IHT and FIHT under Gaussian measurements.}
\label{fig:phase_gaussian}
\end{figure}

To explore how the condition number of the test matrices impact the recovery of FIHT, we further test FIHT on ill-conditioned matrices. The matrices are computed as $\BX_k=\BU_k\BS\BV_k^*$, where $\BU_k$ and $\BV_k$ are two $50$ by $5$  random orthonormal matrices, and $\BS$ is a $5$ by $5$ diagonal matrix with the diagonal entries being $5$ uniformly spaced values from $1$ to $1000$. So the condition number of the test matrices is $1000$. The phase transition plot of FIHT on ill-conditioned matrices is presented in Figure~\ref{fig:FIHT_Gaussian_Ill}. The figure shows that FIHT requires more measurements to reconstruct the test ill-conditioned matrices than to reconstruct the  well-conditioned matrices when $s\geq 5$. However, FIHT is still able to reconstruct all the test ill-conditioned matrices when $m\approx 3.8 sr(2n-r)$. Hence, there is no strong evidence from Figure~\ref{fig:FIHT_Gaussian_Ill} to support that the number of measurements for the successful recovery of FIHT relies on the square of the condition number of the target matrices, which may suggest the possibility of improving the result in Theorem~\ref{thm:fiht}.

In the reminder of this section, we focus on the empirical performance of FIHT for its superiority and flexibility. 
Though we have only established the recovery guarantee of FIHT for real matrices under the Gaussian measurements, the algorithm is equally effective for complex matrices and other types of measurements. In the next two subsections, we study the applications of FIHT  in quantum states demixing and the Internet of Things.

\subsection{An Application in Quantum States Demixing}\label{sec:num_pauli}
As stated in the introduction, the state of a quantum system can be represented by a low rank, positive semidefinite  density matrix of unit trace.  Next, we present a stylized application of FIHT to the demixing of quantum states. 
When the target constituents are $n$ by $n$ rank-$r$ positive semidefinite matrices of unit  trace, FIHT in Algorithm~\ref{alg:fiht} can be modified accordingly to preserve the matrix structures, see Algorithm~\ref{alg:fiht_psd} for  complete description. 

\begin{algorithm}[!ht]
\caption{Fast Iterative Hard Thresholding  for Quantum States Demixing}\label{alg:fiht_psd}
\begin{algorithmic}
\STATE\textbf{Initialization}: $\BX_{k,0}=\BU_{k,0}\BLa_{k,0}\BU_{k,0}^* = \P_{\Delta}\lb\P_{\Pi}(\A_k^*(\by))\rb$
\FOR{$l=0,1,\cdots$}
\STATE  $\br_{l}=\by-\sum_{k=1}^s\A_k(\BX_{k,l})$
\FOR[\emph{Fully Parallel}]{$k=1,\cdots,s$} 
\STATE 1. $\BG_{k,l}=\A_k^*(\br_l)$
\STATE 2. $\alpha_{k,l}= \frac{\|\P_{T_{k,l}}( \BG_{k,l})\|_F^2}{\| \A_k\P_{T_{k,l}}( \BG_{k,l})\|_2^2}$
\STATE 3. $\BX_{k,l+1}=\P_{\Delta}\lb\P_{\Pi}(\BX_{k,l}+\alpha_{k,l}\P_{T_{k,l}}( \BG_{k,l}))\rb$
\ENDFOR
\ENDFOR
\end{algorithmic}
\end{algorithm} 

Let $\BX_{k,l}$ be the current rank-$r$ estimate which is Hermitian, positive semidefinite, and so admits an eigenvalue decomposition (equivalent to its SVD) $\BX_{k,l}=\BU_{k,l}\BLa_{k,l}\BU_{k,l}^*$ with non-negative eigenvalues. There are two key differences between Algorithms~\ref{alg:fiht} and \ref{alg:fiht_psd}.  Firstly, the tangent space can be defined as 
\begin{align*}
T_{k,l} = \{\BU_{k,l}\BZ^*+\BZ\BU_{k,l}^*~|~\BZ\in\C^{n\times r}\}.
\end{align*}
so that all the matrices in $T_{k,l}$ are Hermitian. Therefore after $\BX_{k,l}$ is updated along the projected gradient descent direction, $\BW_{k,l}=\BX_{k,l}+\alpha_{k,l}\P_{T_{k,l}}( \BG_{k,l})$ remains Hermitian. In addition, the eigenvalue decomposition of $\BW_{k,l}$ can also be computed using $O(nr^2)$ flops as in the computation of the SVD of the non-symmetric matrices in Algorithm~\ref{alg:fiht}, since $\BW_{k,l}$ has the following decomposition:
\begin{align*}
\BW_{k,l} = \begin{bmatrix}\BU_{k,l} & \BQ\end{bmatrix}\BM_{k,l}\begin{bmatrix}
\BU_{k,l}^*\\\BQ^*\end{bmatrix},
\end{align*}
where $ \begin{bmatrix}\BU_{k,l} & \BQ\end{bmatrix}$ is an $n\times 2r$ orthonormal matrix and $\BM_{k,l}$ is a  $2r\times 2r$ Hermitian matrix of the form
\begin{align*}
\BM_{k,l} = \begin{bmatrix}
\BLa_{k,l}+\BXi_{k,l} & \BR^*\\
\BR^* & 0
\end{bmatrix}.
\end{align*} Secondly, the hard thresholding operator $\H_r$ is replaced by the projection $\P_{\Pi}$, followed by another projection $\P_{\Delta}$, where $\P_{\Pi}$ projects $\BW_{k,l}$ onto the set of low rank positive semidefinite matrices
\begin{align*}
\Pi = \{\BZ\in\C^{n\times n}~|~\BZ=\BZ^*,\rank(\BZ)\leq r,~\mbox{ and }\BZ\succeq 0\};
\end{align*} 
and $\P_{S}$ projects   the eigenvalues of $\P_{\Pi}(\BW_{k,l})$ onto the simplex
\begin{align*}
\Delta = \lcb\bx\in\R^r~|~\sum_{i=1}^rx_i = 1,~x_i\geq 0\rcb.
\end{align*}

Notice that the inertia of a Hermitian matrix $\bm{H}$, denoted by $\mbox{inertia}(\bm{H})$, is a  triple of integer numbers ($i_p,i_n,i_z$), where $i_p$, $i_n$, $i_z$ are the number of positive, negative, and zero eigenvalues of $\bm{H}$. From \cite[Lemma~A.15]{FGW2002NonOp}, we know that $\mbox{inertia}(\BM_{k,l})=\mbox{inertia}(\bm{N}^*\BM_{k,l}\bm{N})+(t,t,r-t)$, where $t=\rank(\BR)$ and $\bm{N}$ is an $r\times (r-t)$ matrix whose columns forms a null space of $\BR$. This implies that $\BM_{k,l}$ has at most $r$ positive eigenvalues, so does $\BW_{k,l}$. Therefore by \cite[Theorem~2.1]{Higham1988}, we can compute $\P_{\Pi}(\BW_{k,l})$  by only keeping the positive eigenvalues and the corresponding eigenvectors of $\BM_{k,l}$. To compute $\P_S(\cdot)$, we use an algorithm proposed in \cite{chen2011projection}, in which the computational cost is dominated by sorting a vector of length $r$. 


\begin{figure}[!ht]
\centering
\subfloat[]{\includegraphics[width=0.45\textwidth]{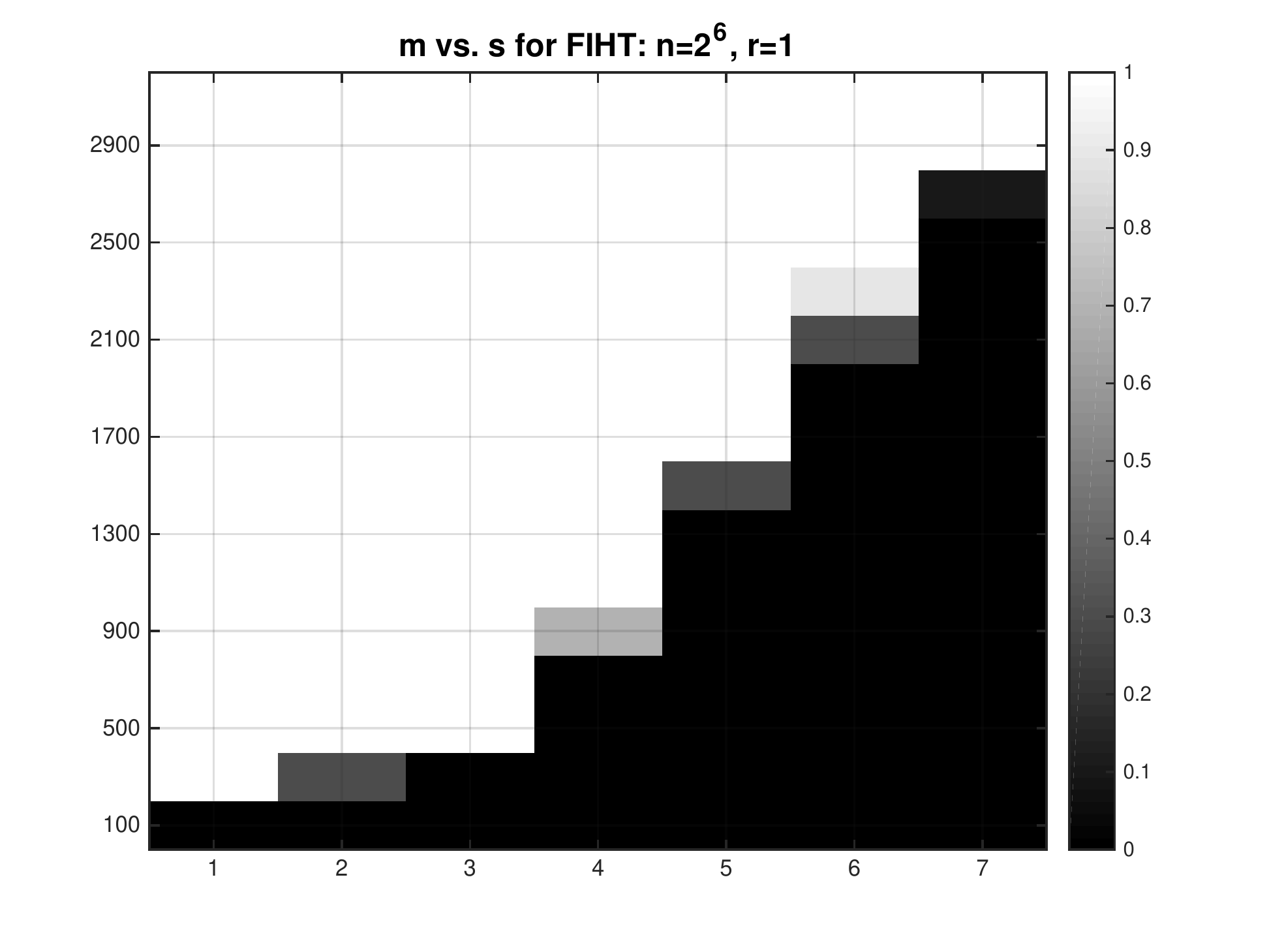}\label{fig:FIHT_Pauli_RankOne}}
\subfloat[]{\includegraphics[width=0.45\textwidth]{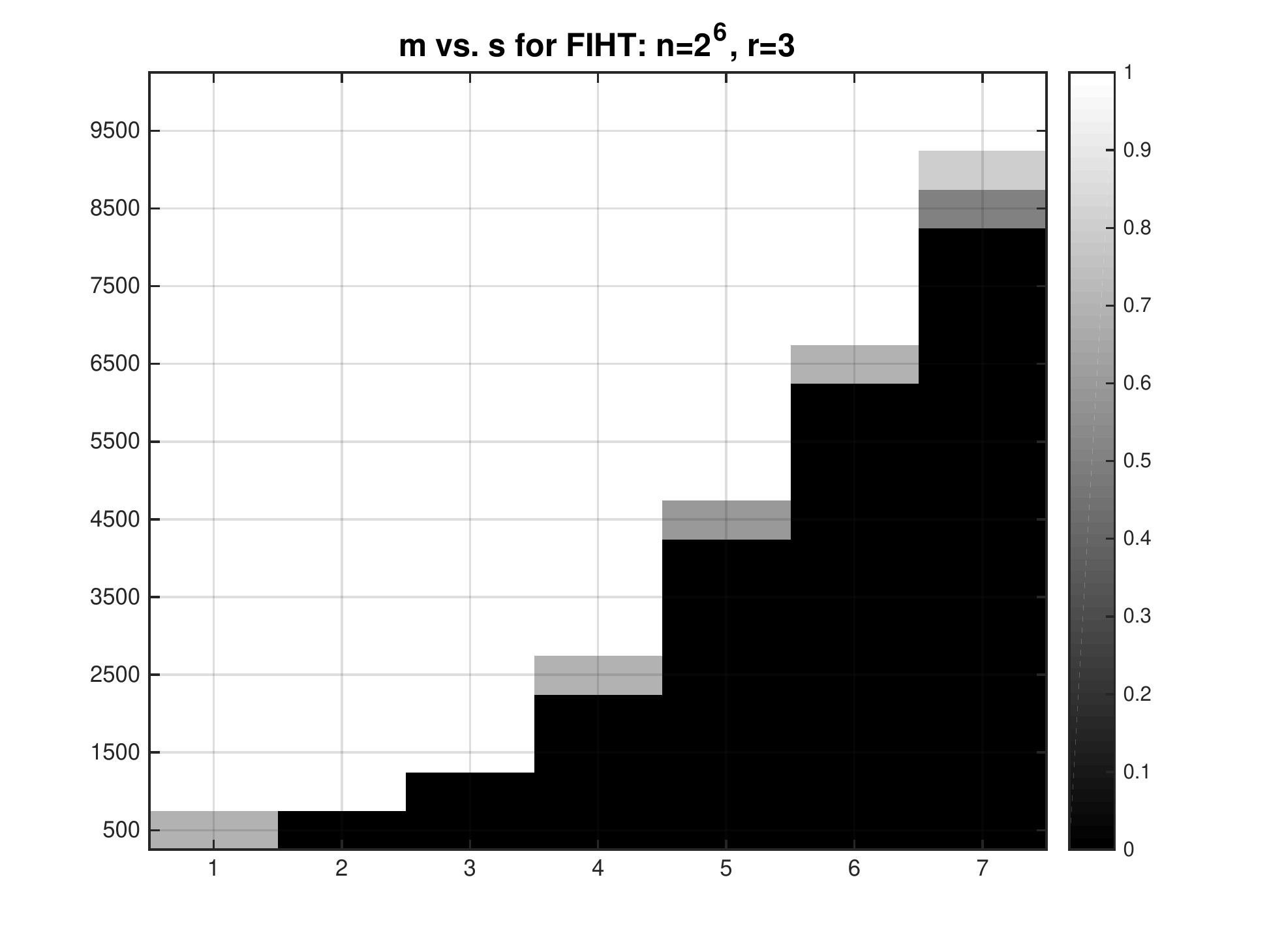}\label{fig:FIHT_Pauli_Rank3}}
\caption{Empirical phase transitions of  FIHT for quantum states demixing.}
\label{fig:phase_pauli}
\end{figure}

We evaluate the recovery performance of FIHT for quantum states demixing under the Pauli measurements, where the measurement matrices in \eqref{eq:linear_operator} are tensor products of the Pauli matrices. That is, 
\begin{align*}
\BA_{k,p}=2^{-q/2}\otimes_{i=1}^q\sigma_a,
\end{align*}
where 
\begin{align*}
\sigma_1=\begin{bmatrix}
0 & 1 \\1 &0
\end{bmatrix},~
\sigma_2=\begin{bmatrix}
0 & -i \\i &0
\end{bmatrix},~
\sigma_3=\begin{bmatrix}
1 & 0 \\0 &-1
\end{bmatrix},~
\sigma_4=\begin{bmatrix}
1 & 0 \\ 0&1
\end{bmatrix}.
\end{align*}
 The target matrices are computed as 
$\BX_{k,l}=\frac{1}{r}\BU_{k,l}\BU_{k,l}^*$, where $\BU_{k,l}$ are random $n\times r$ complex orthonormal matrices. We conduct the experiments with $q=6$, $r=1\mbox{ and }3$, and $a$ is selected uniformly at random. When $r=1$, it corresponds to demixing the pure states in quantum  mechanics. The phase transitions of FIHT under the Pauli measurements  is presented in Figure~\ref{fig:phase_pauli}, which shows that FIHT is equally effective for quantum states demixing and $m\approx 3.5 sr(2n-r)$ is sufficient for successful demixing all the tested values of $s$ from $1$ to $7$.

\subsection{An Application from the Internet-of-Things}\label{sec:num_iot}
Demixing problems of the form~\eqref{lowranksum} are  also expected to play an important role in the future
IoT. In this subsection we conduct some numerical simulations for a demixing problem related to the IoT. Before we proceed to the simulations
we describe in more detail how such a demixing problem does arise in the IoT and how it is connected to the demixing of low rank matrices.

\begin{figure}[!ht]
\begin{center}
\fbox{
\begin{overpic}[width=0.52\textwidth]{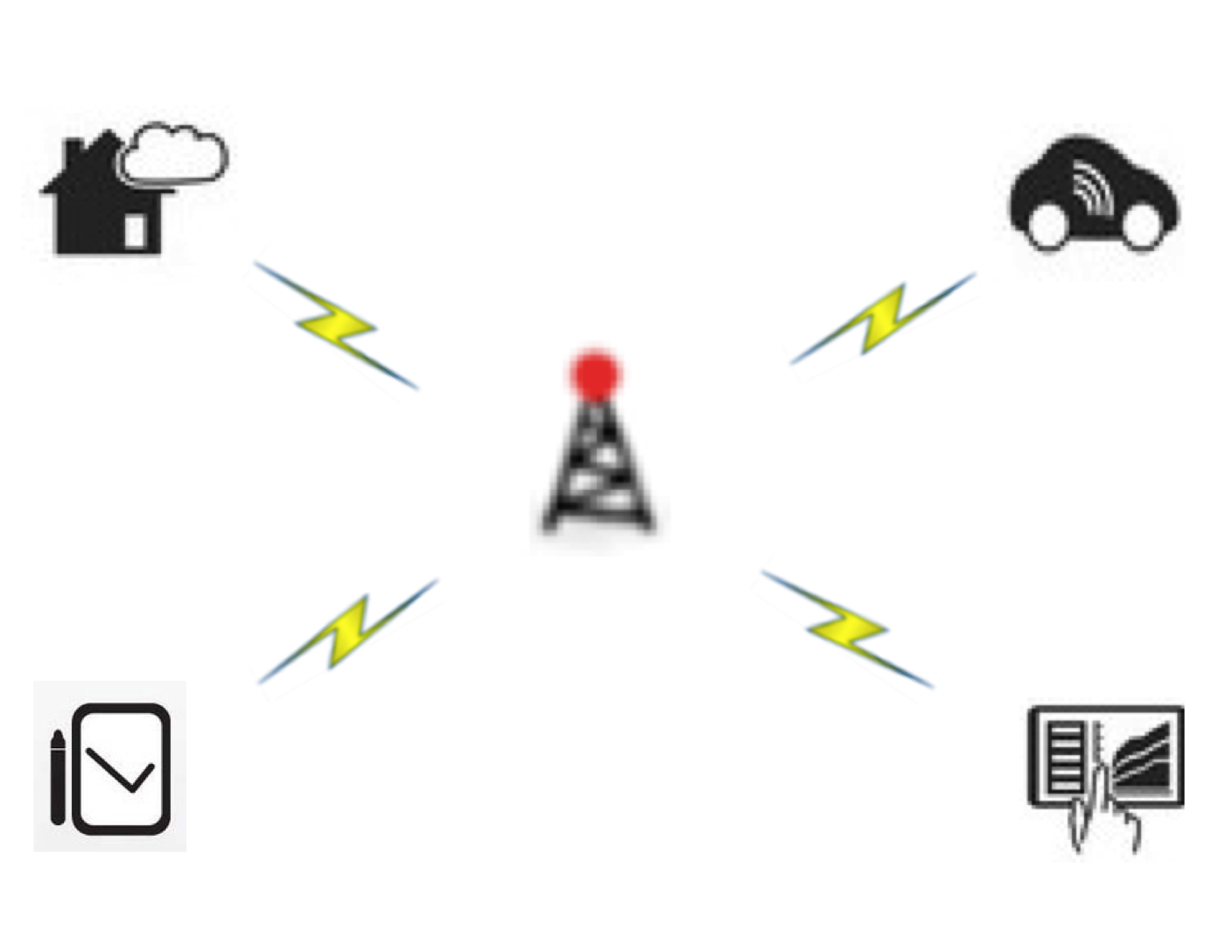}
 \put (1,70) {\large$\displaystyle \bz_1 = \BA_1 \bx_1$}
 \put (15,42) {\large$\displaystyle \bz_1 \ast \bg_1$}
 \put (75,70) {\large$\displaystyle \bz_2 = \BA_2 \bx_2$}
 \put (65,42) {\large$\displaystyle \bz_2 \ast \bg_2$}
 \put (75,3) {\large$\displaystyle \bz_3= \BA_3 \bx_3$}
 \put (58,18) {\large$\displaystyle \bz_3 \ast \bg_3$}
 \put (2,3) {\large$\displaystyle \bz_4 = \BA_4 \bx_4$}
 \put (22,18) {\large$\displaystyle \bz_4 \ast \bg_4$}
 \put (30,63) {\fbox{\large$\displaystyle \by = \sum \bz_k \ast \bg_k$}}
 \put (46.7,52.5) {$\Big\uparrow$}
\end{overpic}
}
\caption{A random access wireless communication scenario in the Internet of Things: Each device communicates with a 
common basestation. No explicit channel estimation is done in order to reduce the scheduling overhead. 
The transmitted signals $\bz_k$ are of the form $\bz_k = \BC_k \bx_k$,
where the $\BC_k$ are encoding matrices and the $\bx_k$ represent the data to be transmitted. 
Each signal $\bz_k$ travels through
a wireless communication channel, represented by $\bg_k$, which acts as convolution. The base station measures the
signal $\by$, which is the sum of all received signals; and it knows the encoding matrices $\BC_k$ of the individual
users, but it does neither know the $\bz_k$'s (or the $\bx_k$'s) nor the transmission channels $\bg_k$. The goal is
to demix the received signal $\by$ and recover the 
signals $\bx_k$ of the individual users.}
\label{fig:wireless}
\end{center}
\end{figure}
In mathematical terms we are dealing---in somewhat simplified form---with the following problem.
We consider a scenario where many different users/devices communicate with a base station,
see~Figure~\ref{fig:wireless} for an illustration of the described setup.
Let $\bx_k\in \C^{n_1}$ be the data that are supposed to be transmitted by the $k$-th user. Before transmission
we encode $\bx_k$ by computing $\bz_k = \BC_k \bx_k$, where $\BC_k\in\C^{m\times n_1}$ is a fixed precoding matrix of the $k$-th user. The encoding matrice $\BC_k$
 differs from user to user but is known to the receiver.  Let $\bg_k$ be the associated 
impulse response of the
communication channel between the $k$-th user and the base station. Since we assume that the channel does not
change during the transmission of the signal $\bz_k$, $\bg_k$ acts as convolution operator, i.e., the signal
arriving at the base station becomes $\bz_k \ast \bg_k$ (ignoring additive noise for simplicity).
However, the channels $\bg_k$ are {\em not known} to the base station (or to the transmitter), since this is 
the overhead information that we want 
to avoid transmitting, as it may change from transmission to transmission.  The signal recorded at the base station
is given by a mixture of all transmitted signals convolved with their respective channel impulse responses, i.e.,
\begin{equation}
\by = \sum_k \bz_k \ast \bg_k.
\label{sumy}
\end{equation}
The goal is to demix the received signal $\by$ and recover the signals $\bx_k$ of the individual users.

But how is~\eqref{sumy} related to our setup in~\eqref{lowranksum}? We first note that in all wireless communication
scenarios of relevance the impulse response $\bg_k$ is compactly supported. In engineering terms the size of the
support of $\bg_k$, denoted here by $n_2$, is called its maximum delay spread. 
In practice,  the length of the transmitted signals $m$ is typically (much) larger than the maximum delay spread $n_2$.
Thus, $\bg_k$ is a $m$-dimensional vector which is formed by the channel impulse
responses being padded with zeroes. We can  write $\bg_k$ as 
$\bg_k = [\bh_k^T , 0,\dots,0]^T$, where $\bh_k$ is the non-zero padded impulse response of length $n_2$. 
Let $\BF$ denote the $m \times m$ unitary Discrete Fourier Transform matrix
and let $\BB$ be the $m\times n_2$ matrix consisting of the first $n_2$ columns of $\BF$.
By the circular convolution theorem, we can express the convolution between $\bz_k$ and $\bg_k$ in the Fourier domain via
\begin{align*}
\frac{1}{\sqrt{m}}\hat{\by}= (\BF \BC_k \bx_k)\odot\BB\bh_k,\numberthis\label{eq:iot_model}
\end{align*}
where $\odot$ denotes the componentwise product.
Let   $\bc_{k,p}^*$ represent the $p$-th row of $\BF \BC_k$ and let $\bb_{p}^*$ represent the $p$-th row of $\BB$.
A simple calculation shows that~\cite{LS15}
\begin{equation}
\label{rankone}
 [(\BF \BC_k \bx_k)\odot\BB\bh_k]_p= (\bc_{k,p}^*\bx_k)\cdot(\bb_p^*\bh_k)=\la\bc_{k,p}\bar{\bb}_p^*,\bx_k\bar{\bh}_k^*\ra
=\la\bc_{k,p}\bar{\bb}_p^*,\BX_k\ra,
\end{equation}
where $\BX_k: = \bx_k\bar{\bh}_k^* $ is an $n_1\times n_2$ rank-one matrix containing the unknown signal and the unknown channel of
the $k$-th user,
while  $\bc _{k,p}$ and $\bb_p^{\ast} $ are known vectors (since $\BC_k$ and $\BB$ are known). 
Hence, at the base station the observed signal $\by = \sum_{k=1}^s \bz_k \ast \bg_k$ can be expressed 
in the Fourier domain as  $\frac{1}{\sqrt{m}}\hat{\by} = \sum_{k=1}^s \A_k(\BX_k)$, which is of the same form as~\eqref{lowranksum}
(modulo replacing $\frac{1}{\sqrt{m}}\hat{\by}$ with $\by$). The measurement matrices in this scenario are  given by $\BA_{k,p} =\bc_{k,p}\bar{\bb}_p^*$.

\begin{figure}[!ht]
\centering
\subfloat[]{\includegraphics[width=0.45\textwidth]{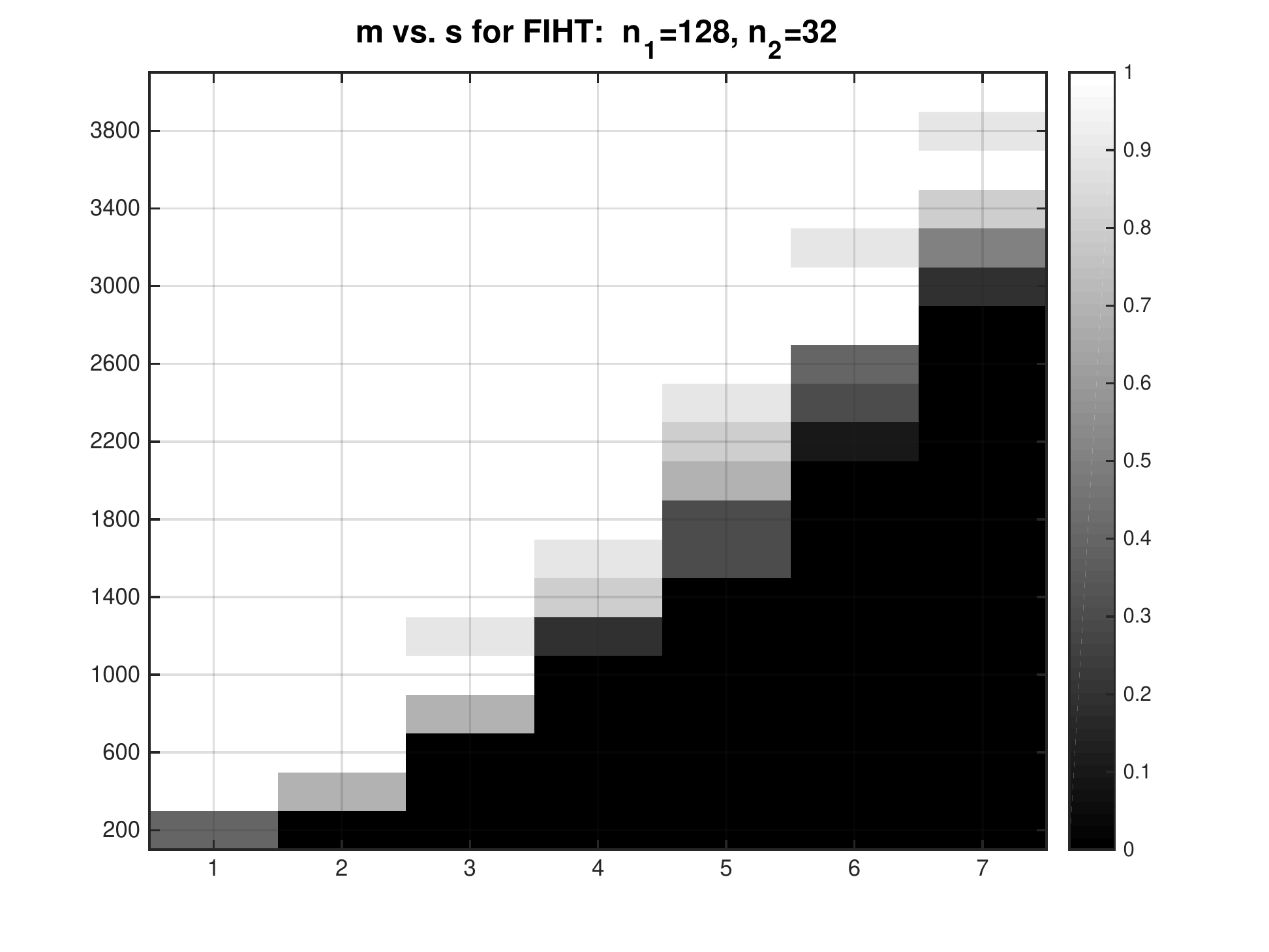}\label{fig:FIHT_Fourier_Gaussian}}
\subfloat[]{\includegraphics[width=0.45\textwidth]{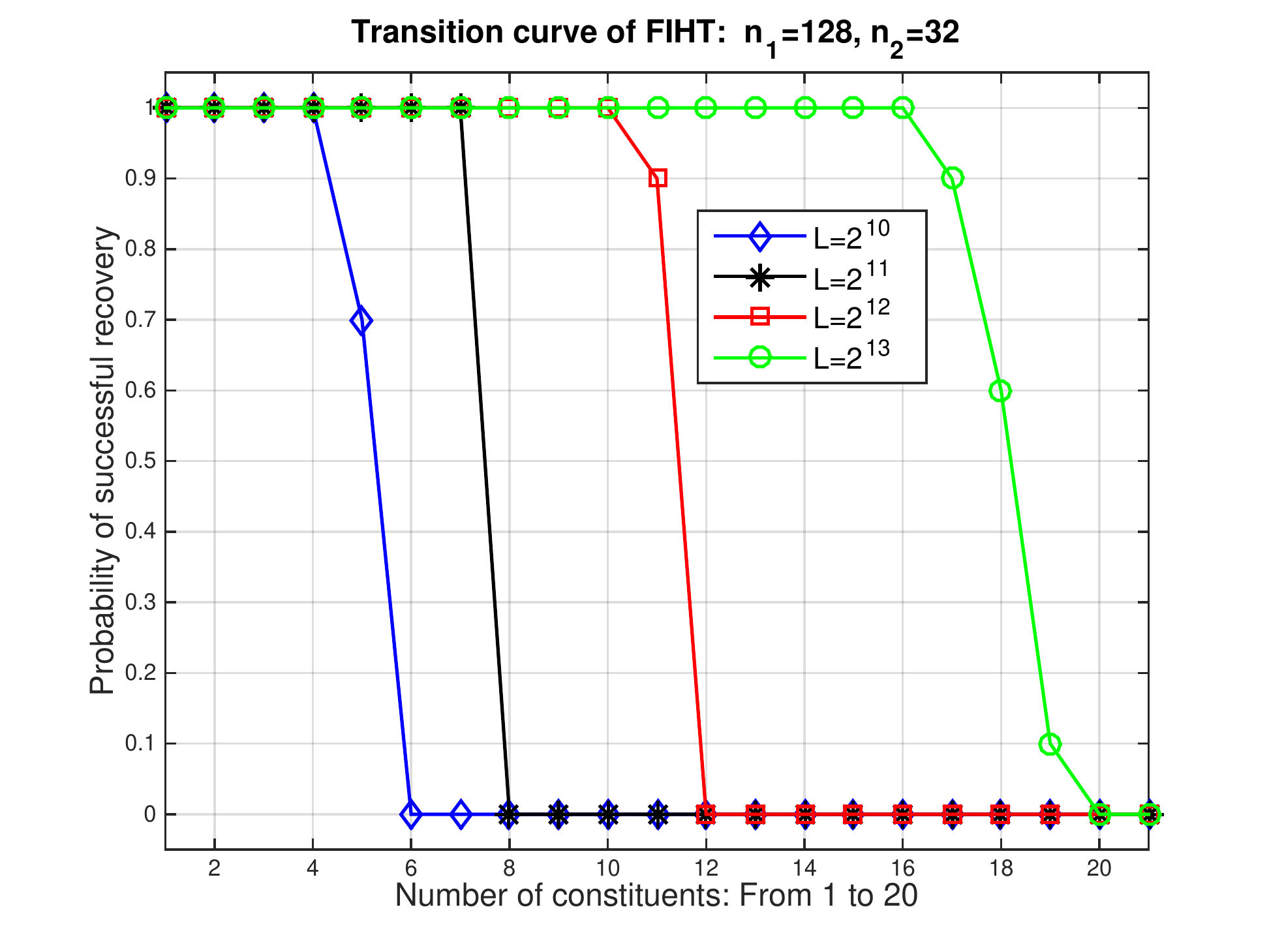}\label{fig:FIHT_Fourier_Hadamard}}
\caption{Empirical phase transitions of FIHT for the application from the IoT.}
\label{fig:phase_fourier}
\end{figure}

We test FIHT  (Algorithm~\ref{alg:fiht}) for the demixing problem described in \eqref{eq:iot_model} with $n_1=128$ and $n_2=32$, which are of interest in practice. As a result, the constituents are non-square matrices. Two types of encoding matrices are tested: a) the entries of $\BC_k$ are standard complex Gaussian variables; and b) $\BC_k=\BD_k\BH$, where $\BH$ is the first $n_2$ columns of an $m\times m$ Hadamard matrix, and then premultiplied by an $m\times m$ diagonal random sign matrix $\BD$. The target signals $\bx_k$ and $\bh_k$ are complex Gaussian random vectors.
The phase transition plot for $\BC_k$ being a Gaussian matrix is presented in Figure~\ref{fig:FIHT_Fourier_Gaussian}, where the number of constituents varies from $1$ to $7$. Figure~\ref{fig:FIHT_Fourier_Gaussian} displays a linear correlation between the number of necessary measurements and the number of the constituents, and it  shows that $m\approx 3s(n_1+n_2)$ number of measurements are sufficient for successful demixing of the transmission signals and the impulse responses. When $\BC_k$ is formed from a partial Hadamard matrix, we only test $m=2^{10}, 2^{11},~2^{12}$, and $2^{13}$ due to the Hadamard conjecture \cite{HeWa1978Hadamard}, but a larger range of $s$ are tested. The phase transition plot presented in Figure~\ref{fig:FIHT_Fourier_Hadamard} shows that FIHT is still able to successfully demix the transmission signals and the impulse responses when $m\approx 3s(n_1+n_2)$.

\subsection{Robustness to Additive Noise}
We explore the robustness of FIHT against additive noise by conducting tests on the problem instances set up in the last three subsections. We have established in Theorem~\ref{thm:iht_noise} that Algorithm~\ref{alg:iht} (IHT) is stable in the presence of additive noise, but could not provide a similar theoretical robustness guarantee for FIHT because of the more involved nature of the current  convergence analysis framework of FIHT. Yet, empirical evidence clearly indicates that FIHT also shows the desirable stability vis-\`a-vis noise, as we will demonstrate in this subsection.

We assume that the measurement vectors $\by$ in the tests are corrupted by 
\begin{align*}
\be=\sigma\cdot\ln\by\rn\cdot\frac{\bw}{\ln\bw\rn},
\end{align*}
where $\bw$ is an $m\times 1$ standard Gaussian vector, either real or complex up to the testing environment, and $\sigma$ takes  nine different values of $\sigma$ from $10^{-4}$ to $1$. 
For each problem instance, two values of the number of measurements are tested. The average relative reconstruction error out in dB against the signal to noise ratio (SNR)  is plotted in Figure~\ref{fig:phase_robust}. The figure shows that the relative reconstruction error scales linearly with the noise levels under different measurement schemes. Moreover,  as desired, the relative reconstruction error decreases linearly on  a log-log scale as the number of measurements increases.
\begin{figure}[!ht]
\centering
\subfloat[]{\includegraphics[width=0.45\textwidth]{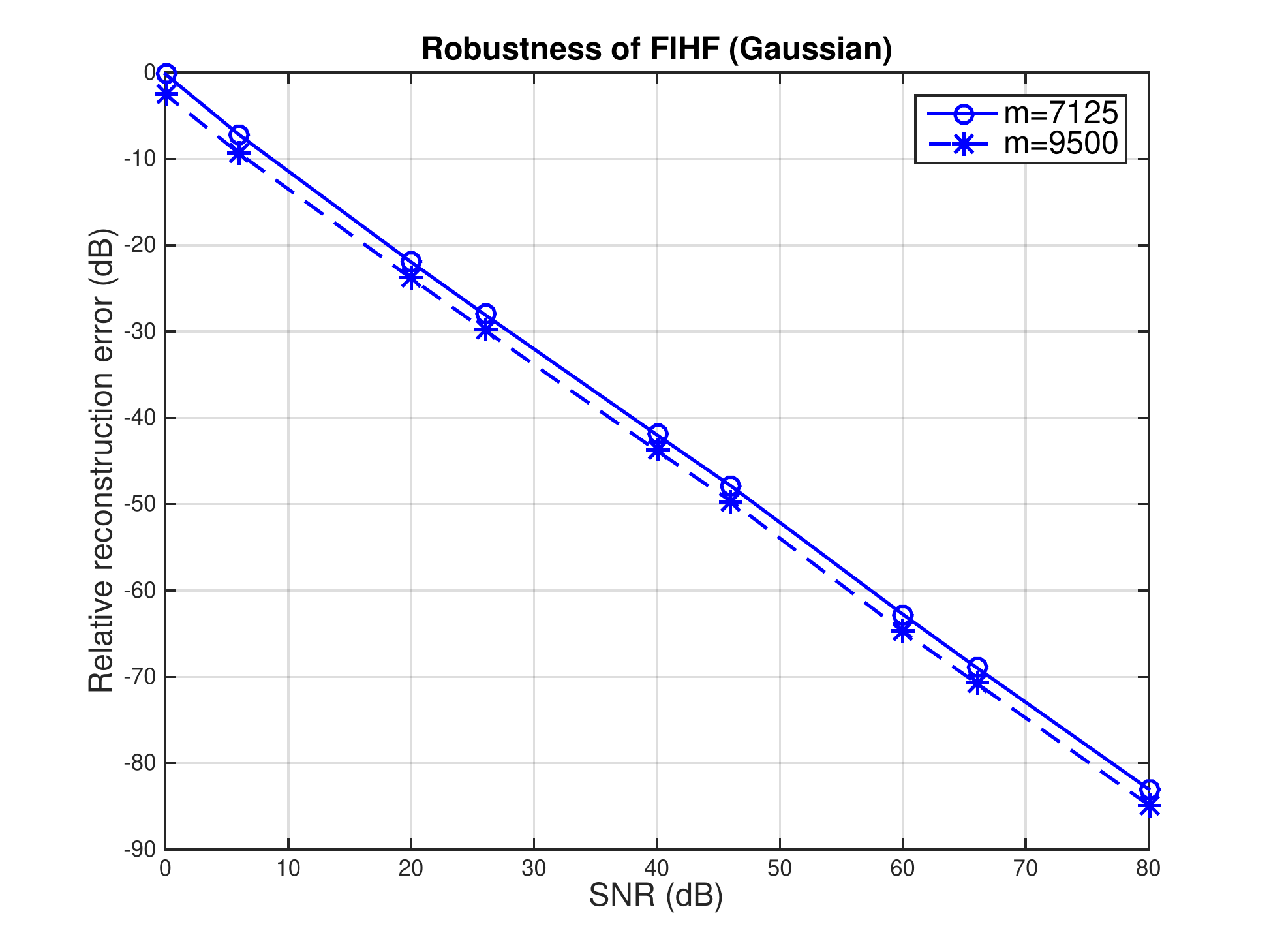}\label{fig:rob_gaussian}}\\
\subfloat[]{\includegraphics[width=0.45\textwidth]{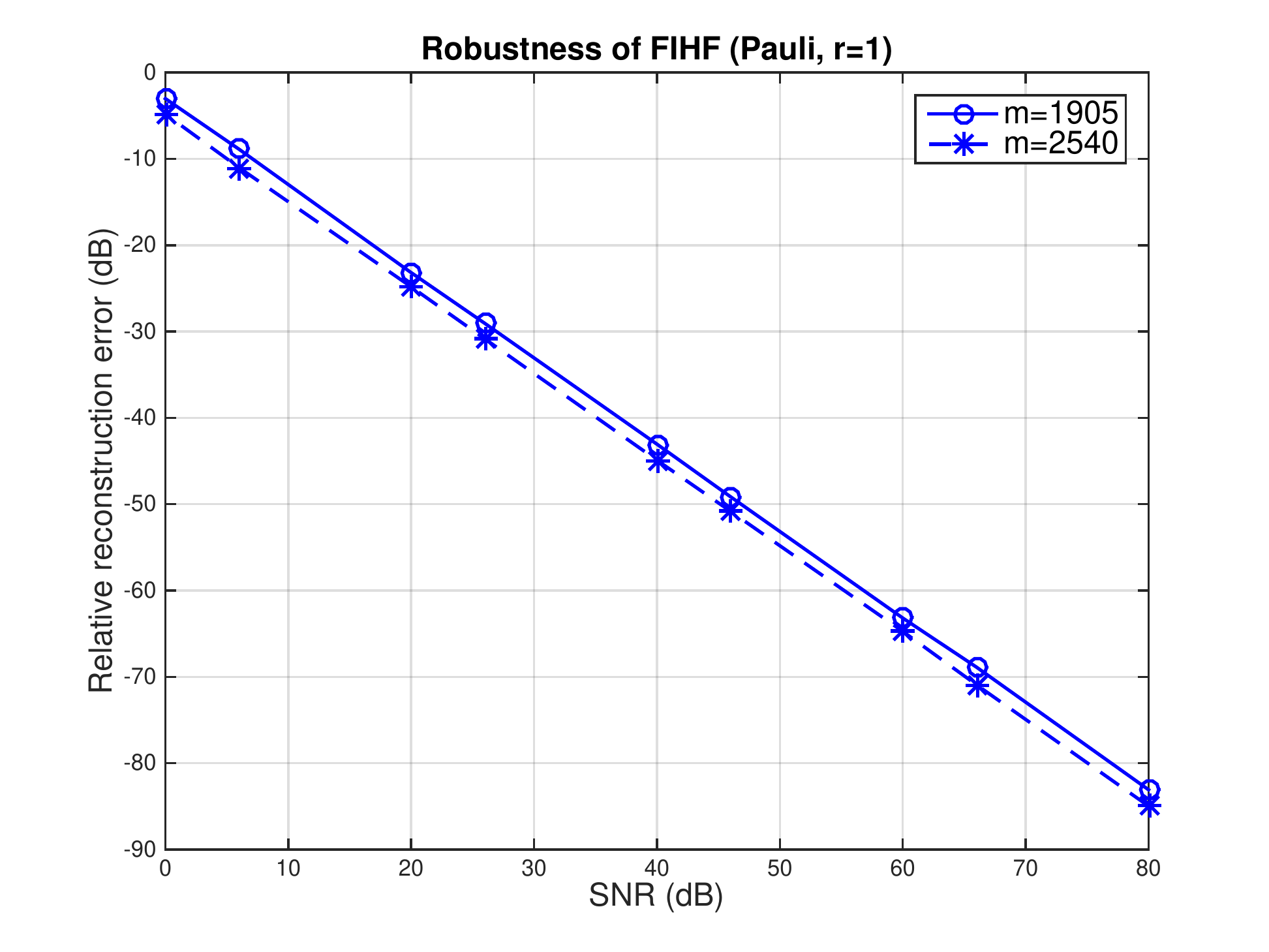}\label{fig:rob_pauli1}}
\subfloat[]{\includegraphics[width=0.45\textwidth]{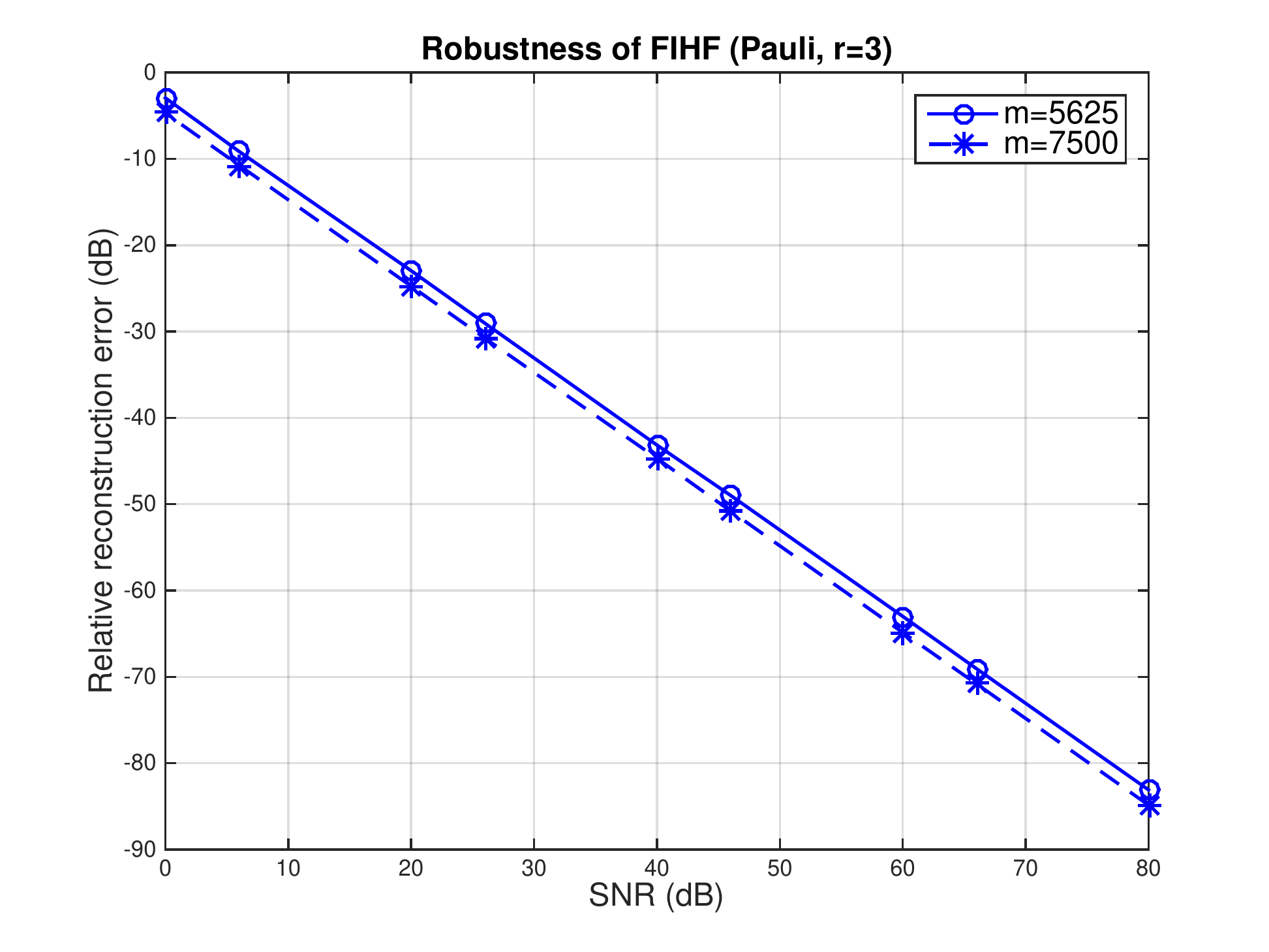}\label{fig:rob_pauli3}}\\
\subfloat[]{\includegraphics[width=0.45\textwidth]{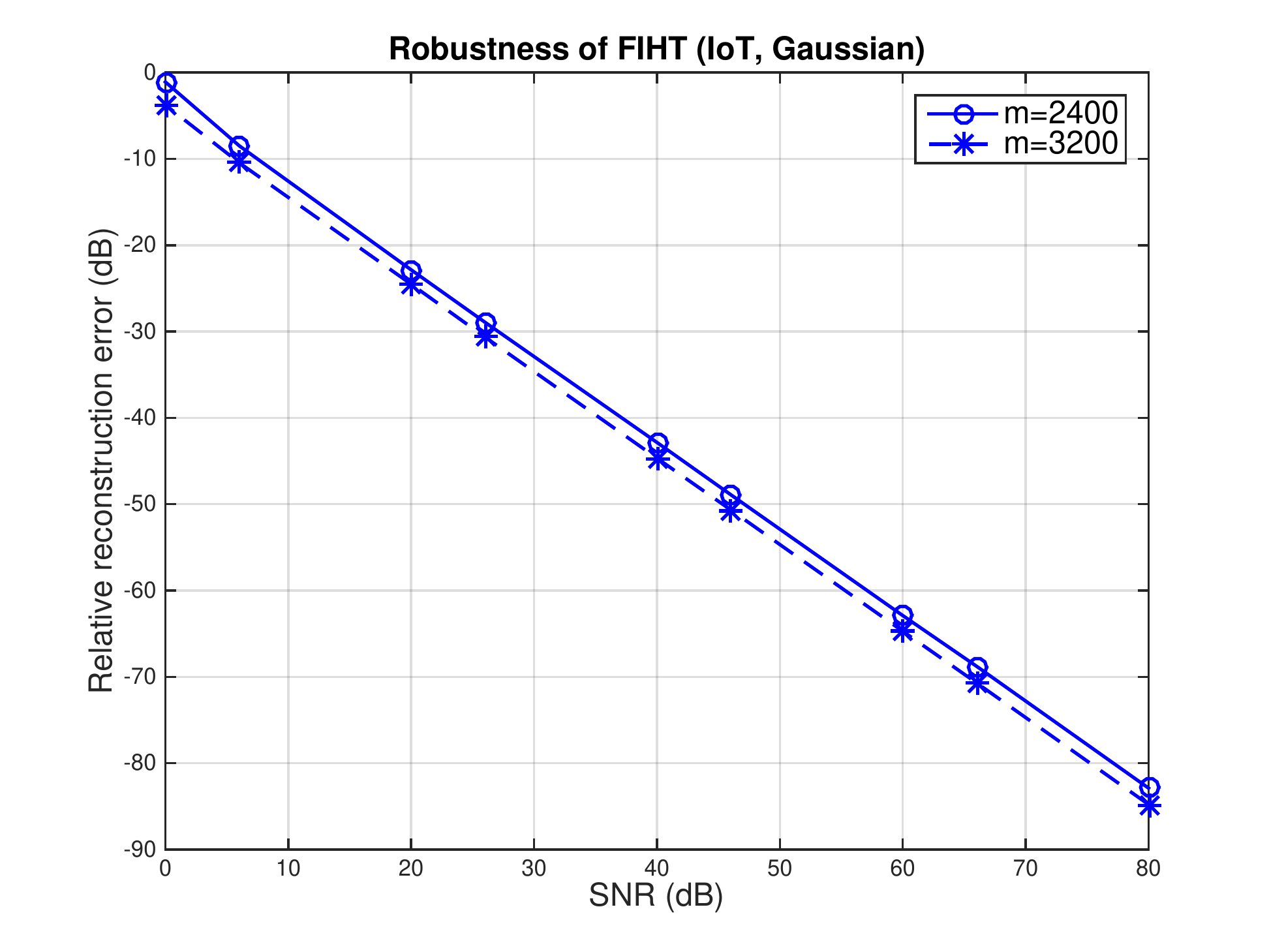}\label{fig:rob_fou_gaussian}}
\subfloat[]{\includegraphics[width=0.45\textwidth]{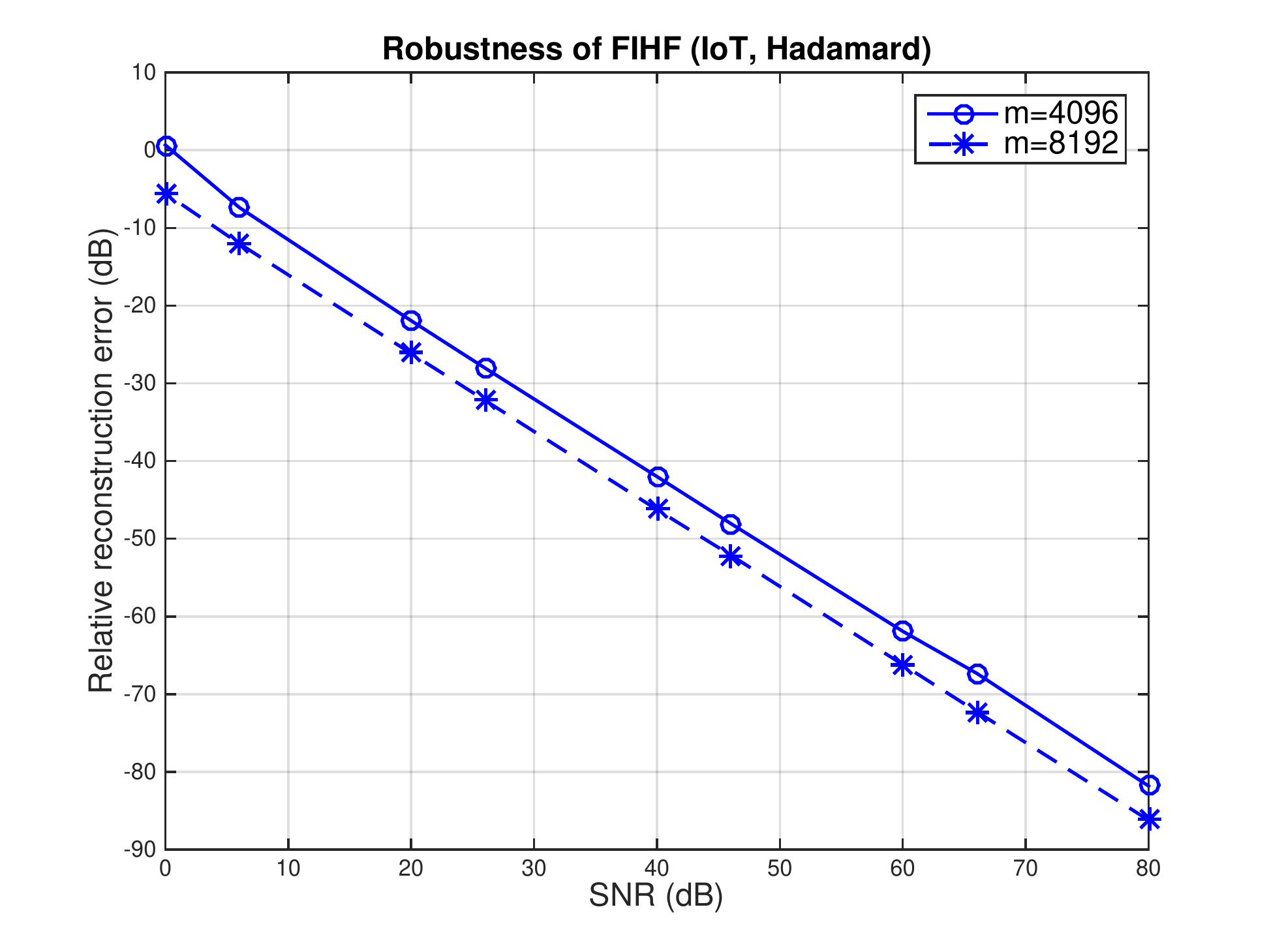}\label{fig:rob_fou_hadamard}}
\caption{Robustness of FIHT under different measurements: (a) Gaussian measurements with $s=5$, (b) and (c) Pauli measurements with $s=5$ , (d) and (e) applications in the IoT with $s=5$ and $s=10$, respectively.  }
\label{fig:phase_robust}
\end{figure}

\subsection{A Rank Increasing Heuristic}
\begin{figure}[!ht]
\centering
\subfloat[]{\includegraphics[width=0.45\textwidth]{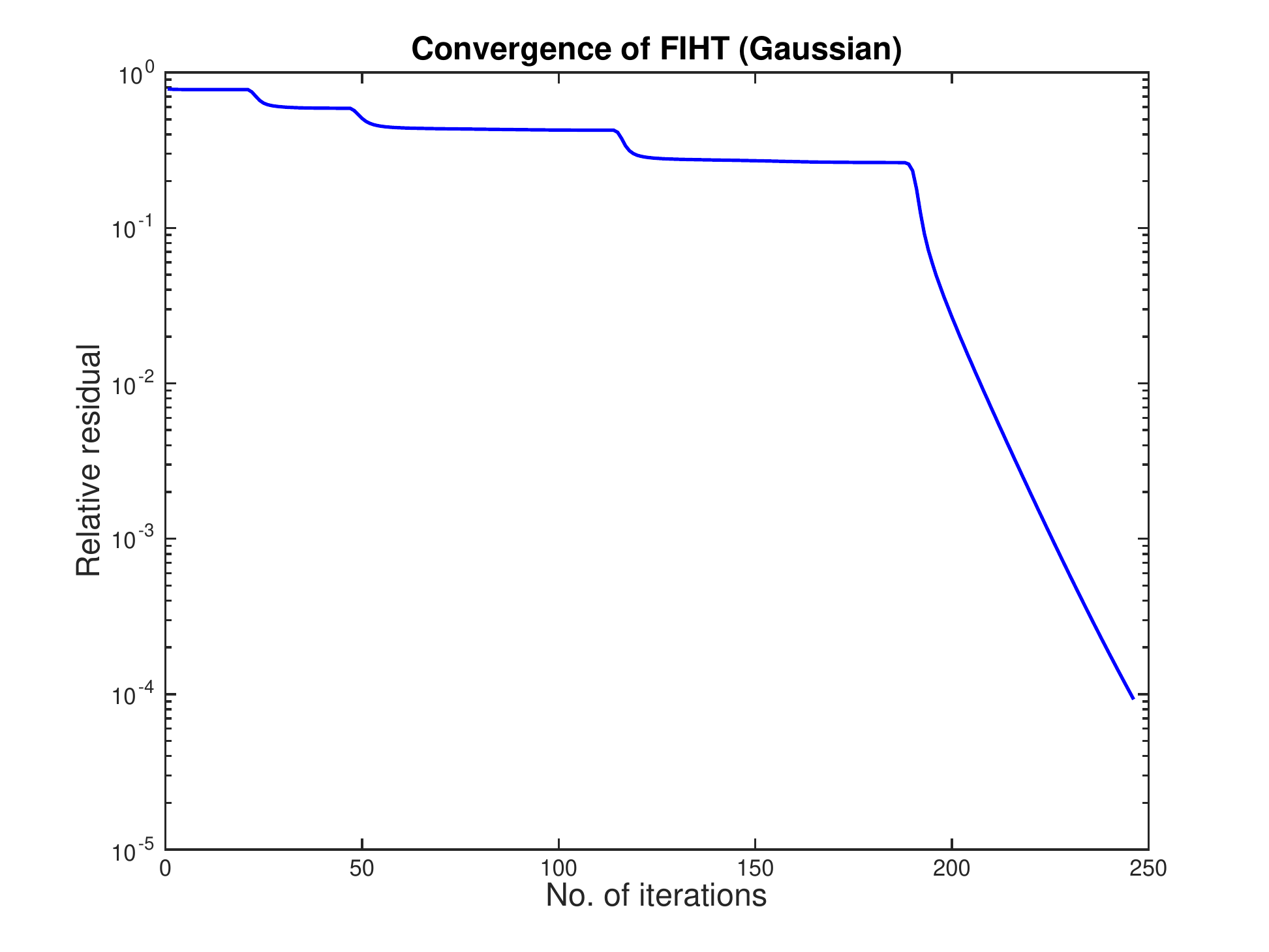}\label{fig:con_gaussian}}
\subfloat[]{\includegraphics[width=0.45\textwidth]{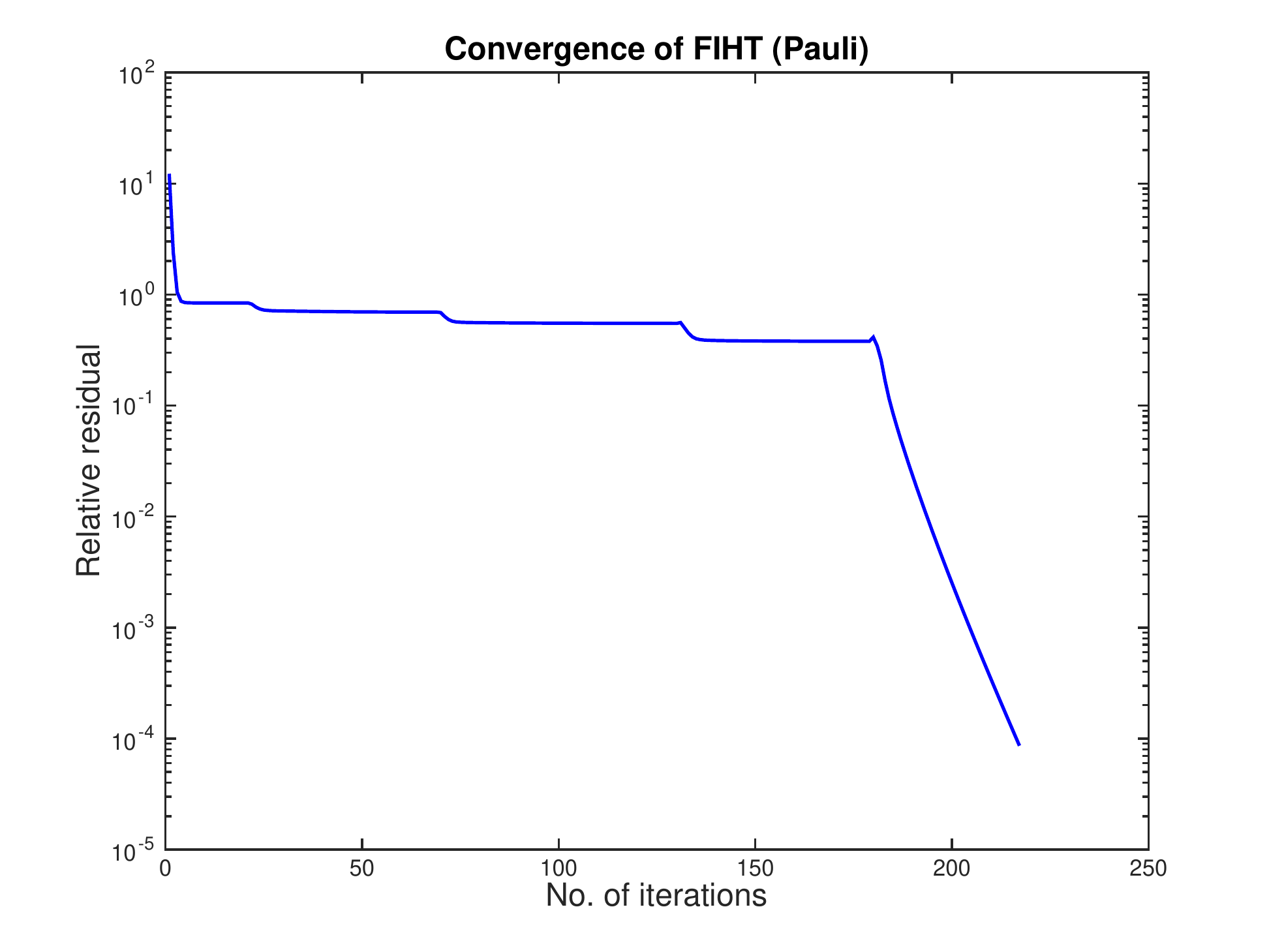}\label{fig:con_pauli}}
\caption{Rank increasing heuristic for FIHT: (a) Gaussian measurements with $s=5$, $n=50$, $r=5$, and $m=3sr(2n-r)$, and (b) Pauli measurements  with $s=5$, $n=2^{6}$, $r=5$, and $m=3sr(2n-r)$. The jumps in the two error curves  correspond to the iteration steps when the algorithm increases the assumed rank of the unknown matrices.}
\label{fig:con_fiht}
\end{figure}

While in some applications the rank of the target matrices is known a priori, for example in  pure states demxing  and the IoT application, 
we may not know the matrix rank exactly in other applications. Here, we suggest a rank increasing heuristic when the rank of the matrices is not known. Starting from rank one, we execute FIHT until the residual is not decreasing significantly; and  then we successively increase the rank and repeat this procedure until the residual is sufficiently small. We test this heuristic using problems instances from Subsections~\ref{sec:num_gaussian} and \ref{sec:num_pauli}.  The relative residual plotted against the number of iterations is presented in Figure~\ref{fig:con_fiht}.
The jumps in the error curves correspond to the iteration steps when the algorithm increases the assumed rank of the unknown matrices by 1.

\section{Proofs}\label{s:proofs}
\subsection{Proof of Theorem~\ref{thm:ARIP}}\label{ss:arip}
\begin{proof}
Let $\{\BZ_k\}_{k=1}^s$ be a set of fixed matrices with $\rank(\BZ_k)\leq r$ for all $1\leq k\leq s$. One can easily see that 
\begin{align*}
\sqrt{m}\sum_{k=1}^s\A_k(\BZ_k) =\begin{bmatrix}
\sum_{k=1}^s\la\BA_{k,1},\BZ_k\ra\\
\vdots\\
\sum_{k=1}^s\la\BA_{k,m},\BZ_k\ra
\end{bmatrix}.
\end{align*}
Since for any $1\leq p\leq m$, $\BA_{1,p}, \BA_{2,p}, \cdots,\BA_{s,p}$ are independent from each other, one has 
\[
\sum_{k=1}^s\la\BA_{k,p},\BZ_k\ra\sim \N(0,\sum_{k=1}^s\ln\BZ_k\rn_F^2).
\]
Therefore,
\begin{align*}
m\ln\sum_{k=1}^s\A_k(\BZ_k)\rn^2=\sum_{p=1}^m \lab \sum_{k=1}^s\la\BA_{k,p},\BZ_k\ra\rab^2\sim \sum_{p=1}^m \lb\sum_{k=1}^s\ln \BZ_k\rn_F^2\rb\xi_p^2,
\end{align*}
where $\xi_p^2,~p=1,\cdots,m$ are i.i.d $\chi^2$ variables. Consequently, $$\E\lsb m\ln\sum_{k=1}^s\A_k(\BZ_k)\rn^2\rsb = m \sum_{k=1}^s\ln \BZ_k\rn_F^2,$$ and the application of the Bernstein inequality for Chi-squared variables \cite{Ver2011rand} gives 
\begin{align*}
&\Pr\lb \lab m\ln\sum_{k=1}^s\A_k(\BZ_k)\rn^2-m \sum_{k=1}^s\ln \BZ_k\rn_F^2\rab\geq t\rb\\
&\leq 2\exp\lb-\min\lcb\frac{t^2}{8m\lb\sum_{k=1}^s\ln \BZ_k\rn_F^2\rb^2},\frac{t}{8\sum_{k=1}^s\ln \BZ_k\rn_F^2}\rcb\rb.
\end{align*}
Taking $t=\delta\cdot m\sum_{k=1}^s\ln\BZ_k\rn_F^2$ gives 
\begin{align*}
\Pr\lb \lab \ln\sum_{k=1}^s\A_k(\BZ_k)\rn^2- \sum_{k=1}^s\ln \BZ_k\rn_F^2\rab\geq \delta\sum_{k=1}^s\ln\BZ_k\rn_F^2\rb
\leq 2\exp\lb-\frac{m\delta^2}{8}\rb
\end{align*}
for $0<\delta<1$.  This implies for fixed $\lcb\BZ_k\rcb_{k=1}^s$, we have that
\begin{align*}
(1-\delta)\sum_{k=1}^s\ln\BZ_k\rn_F^2\leq \ln\sum_{k=1}^s\A_k(\BZ_k)\rn^2\leq (1+\delta)\sum_{k=1}^s\ln\BZ_k\rn_F^2\numberthis\label{eq:ARIP_fixed}
\end{align*}
holds with probability at least $1-2e^{-m\delta^2/8}$.

The rest of the proof  follows essentially from the proof of Theorem~{2.3} in \cite{candesplan2009oracle}.
Due to the homogeneity of the problem with respect to $\sum_{k=1}^s\ln\BZ_k\rn_F^2$,  we only need to show \eqref{eq:ARIP_fixed} for the set of matrices in
\begin{align*}
\S_r=\lcb[\BZ_1,\cdots,\BZ_s]^T:~\rank(\BZ_k)\leq r\mbox{ and }\sum_{k=1}^s\ln\BZ\rn_F^2=1\rcb.
\end{align*}
Based on the SVD of each $\BZ_k$,  we have
\begin{align*}
\S_r=\lcb [\BU_1\BS_1\BV_1^*,\cdots,\BU_s\BS_s\BV_s^*]^T:~\BU_k^*\BU_k=\BI,\BV_k^*\BV=\BI \mbox{ and }\sum_{k=1}^s\sum_{i=1}^r|\BS_k^{i,i}|^2=1\rcb.
\end{align*}
Define $\O_{n,r} =  \{\BQ\in\R^{n\times r}:~\BQ^*\BQ=\BI\}$. It has been shown in \cite{candesplan2009oracle} that $\O_{n,r}$ has a $\epsilon$-net $\overline{\O}_{n,r}\subset \O_{n,r}$ of cardinality $|\overline{\O}_{n,r}|\leq(3/\epsilon)^{nr}$ under the $\ln\cdot\rn_{1,2}$ norm. In other words, for any $\BQ\in \O_{n,r}$, there exists a $\overline{\BQ}\in\overline{\O}_{n,r}$ such that 
\begin{align*}
\ln\BQ-\overline{\BQ}\rn_{1,2}=\max_{1\leq i\leq r}\ln \BQ(:,i)-\overline{\BQ}(:,i)\rn\leq \epsilon.
\end{align*}
Let $\D$ be the set of $rs\times rs$ diagonal matrices  with unit norm diagonal entries. It is well known that $\D$ has a $\epsilon$-net $\overline{\D}\subset\D$ of cardinality $|\overline{\D}|\leq (3/\epsilon)^{rs}$ \cite{Ver2011rand}.
Define 
\begin{align*}{
\overline{\S}_r = \lcb [\overline{\BU}_1\overline{\BS}_1\overline{\BV}_1^*,\cdots,\overline{\BU}_s\overline{\BS}_s\overline{\BV}_s^*]^T:~\overline{\BU}_k\in\overline{\O}_{n,r},~\overline{\BV}_k\in\overline{\O}_{n,r},\mbox{ and }\diag(\overline{\BS}_1,\cdots, \overline{\BS}_k)\in\overline{\D}\rcb}.
\end{align*}
We are going to show that $\overline{\S}_r$ is a $3\epsilon$-net of $\S_r$ with cardinality $|\overline{\S}_r|\leq (3/\epsilon)^{(2n+1)rs}$. 

First note that for any $[\BU_1\BS_1\BV_1^*,\cdots,\BU_s\BS_s\BV_s^*]^T$, there exists a $[\overline{\BU}_1\overline{\BS}_1\overline{\BV}_1^*,\cdots,\overline{\BU}_s\overline{\BS}_s\overline{\BV}_s^*]^T\in\overline{\S}_r$ such that 
$\ln \BU_k-\overline{\BU}_k\rn_{1,2}\leq\epsilon$,  $\ln \BV_k-\overline{\BV}_k\rn_{1,2}\leq\epsilon$, and $\sqrt{\sum_{k=1}^s\ln\BS_k-\overline{\BS}_k\rn_F^2}\leq\epsilon$. Thus, 
\begin{align*}
\ln\begin{bmatrix}{\BU}_1{\BS}_1{\BV}_1^*\\
\vdots\\
{\BU}_s{\BS}_s{\BV}_s^*
\end{bmatrix}
-
\begin{bmatrix}
\overline{\BU}_1\overline{\BS}_1\overline{\BV}_1^*\\
\vdots\\
\overline{\BU}_s\overline{\BS}_s\overline{\BV}_s^*
\end{bmatrix}\rn_F&\leq
\ln\begin{bmatrix}
(\BU_1-\overline{\BU}_1)\BS_1\BV_1^*\\
\vdots\\
(\BU_s-\overline{\BU}_s)\BS_s\BV_s^*
\end{bmatrix}\rn_F+\ln
\begin{bmatrix}
\overline{\BU}_1(\BS_1-\overline{\BS}_1)\BV_1^*\\
\vdots\\
\overline{\BU}_s(\BS_s-\overline{\BS}_s)\BV_s^*
\end{bmatrix}\rn_F\\
&\quad+\ln
\begin{bmatrix}
\overline{\BU_1}\overline{\BS}_1(\BV_1-\overline{\BV}_1)^*\\
\vdots\\
\overline{\BU_s}\overline{\BS}_s(\BV_s-\overline{\BV}_s)^*
\end{bmatrix}\rn_F\\
&:=\II_1+\II_2+\II_3.
\end{align*}
Since
\begin{align*}
\ln\begin{bmatrix}
(\BU_1-\overline{\BU}_1)\BS_1\BV_1^*\\
\vdots\\
(\BU_s-\overline{\BU}_s)\BS_s\BV_s^*
\end{bmatrix}\rn_F^2 &= \sum_{k=1}^s\ln(\BU_k-\overline{\BU}_k)\BS_k\BV_k^*\rn_F^2\\
&=\sum_{k=1}^s\ln (\BU_k-\overline{\BU}_k)\BS_k\rn_F^2\\
&=\sum_{k=1}^s\sum_{i=1}^r|\BS_k^{i,i}|^2\ln\BU_k(:,i)-\overline{\BU}_k(:,i)\rn^2\\
&\leq\sum_{k=1}^s\sum_{i=1}^r|\BS_k^{i,i}|^2\ln \BU_k-\overline{\BU}_k\rn_{1,2}^2\\
&\leq \epsilon^2 \sum_{k=1}^s\sum_{i=1}^r|\BS_k^{i,i}|^2=\epsilon^2,
\end{align*}
we have $\II_1\leq \epsilon$. Similarly, the third term can be bounded as $\II_3\leq \epsilon$. To bound $\II_2$, note that 
\begin{align*}
\ln
\begin{bmatrix}
\overline{\BU}_1(\BS_1-\overline{\BS}_1)\BV_1^*\\
\vdots\\
\overline{\BU}_s(\BS_1-\overline{\BS}_s)\BV_s^*
\end{bmatrix}\rn_F^2&=\sum_{k=1}^s\ln \overline{\BU}_k(\BS_k-\overline{\BS}_k)\BV_k^*\rn_F^2\\
&=\sum_{k=1}^s\ln \BS_k-\overline{\BS}_k\rn_F^2=\epsilon^2.
\end{align*}
So we also have $\II_2\leq\epsilon$. Combining the bounds for $\II_1,~\II_2$ and $\II_3$ together implies that $\overline{\S}_r$ is a $3\epsilon$-net of $\S_r$ with cardinality $|\overline{\S}_r|\leq (3/\epsilon)^{(2n+1)rs}$. Therefore, 
\begin{align*}
&\Pr\lb\max_{\{\BZ_k\}_{k=1}^s\in\overline{\S}_r} \lab\ln\sum_{k=1}^s\A_k(\BZ_k)\rn^2- \sum_{k=1}^s\ln \BZ_k\rn_F^2\rab\geq \delta/2\rb
\leq 2\lb\frac{3}{\epsilon}\rb^{(2n+1)rs}e^{-m\delta^2/32}
\leq 2e^{-m\delta^2/64}
\end{align*}
provided
\begin{align*}
m\geq 64\delta^{-2}(2n+1)rs\log(3/\epsilon).
\end{align*}
So  for all  $\lcb\overline{\BZ}_k\rcb\in\overline{\S}_k$, we have 
\begin{align*}
1-\delta/2\leq\ln\sum_{k=1}^s\A_k(\overline{\BZ}_k)\rn\leq 1+ \delta/2 
\end{align*}
with probability at least $1-2e^{-m\delta^2/64}$.

Define
\begin{align*}
\kappa_r = \sup_{\{\BZ_k\}_{k=1}^s\in{\S_r}}\ln\sum_{k=1}^s\A_k(\BZ_k)\rn.
\end{align*}In the following, we will take $\epsilon=\delta/(12\sqrt{2})$. So $\overline{\S}_r$ is  a $\delta/(4\sqrt{2})$-net of $\S_r$;
and for any $\{\BZ_k\}_{k=1}^s\in\S_r$, there exist  $\{\overline{\BZ}_k\}_{k=1}^s\in\overline{\S}_r$ such that 
\begin{align*}\ln
\begin{bmatrix}
\BZ_1-\overline{\BZ}_1\\
\vdots\\
\BZ_s-\overline{\BZ}_s
\end{bmatrix}\rn_F=\sqrt{\sum_{k=1}^s\ln \BZ_k-\overline{\BZ}_k\rn_F^2}\leq \delta/(4\sqrt{2}).
\end{align*}
Thus, 
\begin{align*}
\ln\sum_{k=1}^s\A_k(\BZ_k)\rn & \leq \ln \sum_{k=1}^s\A_k\lb\overline{\BZ}_k\rb\rn+\ln\sum_{k=1}^s\A_k\lb\BZ_k-\overline{\BZ}_k\rb\rn\\
&\leq 1+\delta/2+\ln\sum_{k=1}^s\A_k\lb\BZ_k-\overline{\BZ}_k\rb\rn.\numberthis\label{eq:rip_01}
\end{align*}
Since $\BZ_k-\overline{\BZ}_k$ is a matrix of rank at most  $2r$, we can decompose it as 
\begin{align*}
\BZ_k-\overline{\BZ}_k = \BY_k^1+\BY_k^2,
\end{align*}
where $\la\BY_k^1,\BY_k^2\ra=0$, $\rank(\BY_k^i)\leq r ~(i=1,2)$, and $\ln\BY_k^1\rn_F^2+\ln\BY_k^2\rn_F^2=\ln \BZ_k-\overline{\BZ}_k \rn_F^2$. Then it follows that
\begin{align*}
\ln\sum_{k=1}^s\A_k\lb\BZ_k-\overline{\BZ}_k\rb\rn&\leq \ln\sum_{k=1}^s\A_k(\BY_k^1)\rn+\ln\sum_{k=1}^s\A_k(\BY_k^2)\rn\\
&\leq \kappa_r\lb\sqrt{\sum_{k=1}^s\ln\BY_k^1\rn_F^2}+\sqrt{\sum_{k=1}^s\ln\BY_k^2\rn_F^2}\rb\\
&\leq\sqrt{2} \kappa_r\sqrt{\sum_{k=1}^s\ln\BY_k^1\rn_F^2+\sum_{k=1}^s\ln\BY_k^2\rn_F^2}\\
&=\sqrt{2} \kappa_r\sqrt{\sum_{k=1}^s\ln \BZ_k-\overline{\BZ}_k \rn_F^2}\\
&\leq \frac{\delta\kappa_r}{4}.\numberthis\label{eq:rip_02}
\end{align*}
where the second line follows from $\lcb \BY_k^i/\sqrt{\sum_{k=1}^s\ln\BY_k^i\rn_F^2}\rcb_{k=1}^s\in\S_r$ for $i=1,2$.  Inserting \eqref{eq:rip_02} into \eqref{eq:rip_01} gives 
$$
\kappa_r\leq 1+\delta/2+\delta\kappa_r/4,
$$
which implies $\kappa_r\leq 1+\delta$ and 
\begin{align*}
\ln\sum_{k=1}^s\A_k(\BZ_k)\rn\leq 1+\delta\numberthis\label{eq:rip_03}
\end{align*}
for all $\{\BZ_k\}_{k=1}^s\in{\S_r}$. The lower bound can be obtained in the following way: 
\begin{align*}
\ln\sum_{k=1}^s\A_k(\BZ_k)\rn&\geq \ln \sum_{k=1}^s\A_k\lb\overline{\BZ}_k\rb\rn-\ln\sum_{k=1}^s\A_k\lb\BZ_k-\overline{\BZ}_k\rb\rn\\
&\geq 1-\delta/2-\delta(1+\delta)/4\\
&\geq 1-\delta.\numberthis\label{eq:rip_04}
\end{align*}
where the second line follows from \eqref{eq:rip_02} and the bound for $\kappa_r$. The upper and lower bounds in \eqref{eq:rip_03} and \eqref{eq:rip_04} can be easily transferred into the desired ARIP bounds by a change of variables.
\end{proof}
\subsection{Proof of Theorem~\ref{thm:iht}}\label{ss:iht}
\begin{proof}
Let $\BU_{k,l+1}$, $\BU_{k,l}$, and $\BU_k$ be the left singular vectors of $\BX_{k,l+1}$, $\BX_{k,l}$ and $\BX_k$, respectively. Let $\BQ_{k,l}$ be an $n\times 3r$ orthonormal matrix which spans the union of the column spaces of $\BU_{k,l+1}$, $\BU_{k,l}$, and $\BU_k$. We have $\BX_{k,l+1}=\P_{\BU_{k,l+1}}(\BX_{k,l}+\alpha_{k,l}\BG_{k,l})$.  Moreover, by comparing the following two equalities,
\begin{align*}
\ln \BX_{k,l+1}-\lb\BX_{k,l}+\alpha_{k,l}\BG_{k,l}\rb\rn_F^2&=\ln\BX_{k,l+1}-\P_{\BQ_{k,l}}(\BX_{k,l}+\alpha_{k,l}\BG_{k,l})\rn_F^2+\ln\P_{\BQ_{k,l}^\perp}(\BX_{k,l}+\alpha_{k,l}\BG_{k,l})\rn_F^2,\\
\ln \BX_{k}-\lb\BX_{k,l}+\alpha_{k,l}\BG_{k,l}\rb\rn_F^2&=\ln\BX_{k}-\P_{\BQ_{k,l}}(\BX_{k,l}+\alpha_{k,l}\BG_{k,l})\rn_F^2+\ln\P_{\BQ_{k,l}^\perp}(\BX_{k,l}+\alpha_{k,l}\BG_{k,l})\rn_F^2,
\end{align*}
we have 
\begin{align*}
\ln\BX_{k,l+1}-\P_{\BQ_{k,l}}(\BX_{k,l}+\alpha_{k,l}\BG_{k,l})\rn_F\leq \ln\BX_{k}-\P_{\BQ_{k,l}}(\BX_{k,l}+\alpha_{k,l}\BG_{k,l})\rn_F
\end{align*}
since $\BX_{k,l+1}$ is the best rank $r$ approximation of $\BX_{k,l}+\alpha_{k,l}\BG_{k,l}$. Thus, 
\begin{align*}
\ln \BX_{k,l+1}-\BX_k\rn_F&\leq \ln\BX_{k,l+1}-\P_{\BQ_{k,l}}(\BX_{k,l}+\alpha_{k,l}\BG_{k,l})\rn_F+\ln\BX_{k}-\P_{\BQ_{k,l}}(\BX_{k,l}+\alpha_{k,l}\BG_{k,l})\rn_F\\
&\leq 2\ln\BX_{k}-\P_{\BQ_{k,l}}(\BX_{k,l}+\alpha_{k,l}\BG_{k,l})\rn_F\\
&=2\ln\P_{\BQ_{k,l}}(\BX_{k,l})-\P_{\BQ_{k,l}}(\BX_{k})+\alpha_{k,l}\P_{\BQ_{k,l}}(\BG_{k,l})\rn_F.
\end{align*}
Consequently we have
\begin{align*}
\ln \begin{bmatrix}
\BX_{1,l+1}-\BX_1\\
\vdots\\
\BX_{s,l+1}-\BX_s
\end{bmatrix}\rn_F&\leq 2\ln
\begin{bmatrix}
\P_{\BQ_{1,l}}(\BX_{1,l})-\P_{\BQ_{1,l}}(\BX_{1})+\alpha_{1,l}\P_{\BQ_{1,l}}(\BG_{1,l})\\
\vdots\\
\P_{\BQ_{s,l}}(\BX_{s,l})-\P_{\BQ_{s,l}}(\BX_{s})+\alpha_{s,l}\P_{\BQ_{s,l}}(\BG_{s,l})
\end{bmatrix}
\rn_F\\
&=2\ln
\begin{bmatrix}
\P_{\BQ_{1,l}}(\BX_{1,l}-\BX_{1})-\alpha_{1,l}\P_{\BQ_{1,l}}\A_1^*\sum_{k=1}^s\A_k(\BX_{k,l}-\BX_k)\\
\vdots\\
\P_{\BQ_{s,l}}(\BX_{s,l}-\BX_{s})-\alpha_{s,l}\P_{\BQ_{1,l}}\A_s^*\sum_{k=1}^s\A_k(\BX_{k,l}-\BX_k)
\end{bmatrix}
\rn_F\\
&=2\ln
\begin{bmatrix}
\P_{\BQ_{1,l}}(\BX_{1,l}-\BX_{1})-\alpha_{1,l}\P_{\BQ_{1,l}}\A_1^*\sum_{k=1}^s\A_k\P_{\BQ_{k,l}}(\BX_{k,l}-\BX_k)\\
\vdots\\
\P_{\BQ_{s,l}}(\BX_{s,l}-\BX_{s})-\alpha_{s,l}\P_{\BQ_{1,l}}\A_s^*\sum_{k=1}^s\A_k\P_{\BQ_{k,l}}(\BX_{k,l}-\BX_k)
\end{bmatrix}
\rn_F\\
&:=2\ln\II_1\rn_F\numberthis\label{eq:iht_eq1}
\end{align*}
where in the third line, we have used the fact $\P_{\BQ_{k,l}}(\BX_{k,l}-\BX_k)=(\BX_{k,l}-\BX_k)$ for all $1\leq k\leq s$. In order to bound $\II_1$, we first rewrite it into the following matrix-vector product form
\begin{align*}
\II_1 =\A_\alpha
\begin{bmatrix}
\BX_{1,l}-\BX_{1}\\
\vdots\\
\BX_{s,l}-\BX_{s}
\end{bmatrix},\numberthis\label{eq:iht_eq2}
\end{align*}
where 
\begin{align*}\A_\alpha&=
\begin{bmatrix}
\P_{\BQ_{1,l}} & \cdots & 0\\
\vdots&\ddots&\vdots\\
0&\cdots&P_{\BQ_{s,l}}
\end{bmatrix}-\begin{bmatrix}
\alpha_{1,l}  & &\\
& \ddots&\\
& &  \alpha_{s,l}
\end{bmatrix}
\begin{bmatrix}
\P_{\BQ_{1,l}}\A_1^*\\
\vdots\\
\P_{\BQ_{s,l}}\A_s^*
\end{bmatrix}
\begin{bmatrix}
\A_1\P_{\BQ_{1,l}}&\cdots&\A_s\P_{\BQ_{s,l}}
\end{bmatrix}.
\end{align*}So it suffices to bound the spectral norm of $\A_\alpha$. Denote $\A_\alpha$ by $\A$ when $\alpha_{k,l}=1$ for all $1\leq k\leq s$ in $\A_{\alpha}$. Let $\BY=[\BY_1,\cdots,\BY_s]^T$. Since $\A$ is self-adjoint, we have 
\begin{align*}
\ln\A\rn&=\sup_{\ln\BY\rn_F=1}\lab\la\BY,\A(\BY)\ra\rab\\
& = \sup_{\ln\BY\rn_F=1}\lab\sum_{k=1}^s\ln\P_{\BQ_{k,l}}(\BY_k)\rn_F^2-\ln\sum_{k=1}^s\A_k\P_{\BQ_{k,l}}(\BY_k)\rn\rab\\
&\leq \sup_{\ln\BY\rn_F=1}\delta_{3r} \sum_{k=1}^s\ln\P_{\BQ_{k,l}}(\BY_k)\rn_F^2\\
&\leq \delta_{3r},
\end{align*}
where the third line follows from the ARIP and the fact  $\P_{\BQ_{k,l}}(\BY_k),~k=1,\cdots,s$ are matrices of rank at most $3r$. 
It follows that
\begin{align*}
\ln\A_\alpha\rn&\leq \ln\A_\alpha-\A\rn+\ln\A\rn\\
&\leq \max_{k}|\alpha_{k,l}-1|\ln\begin{bmatrix}
\P_{\BQ_{1,l}}\A_1^*\\
\P_{\BQ_{2,l}}\A_2^*\\
\vdots\\
\P_{\BQ_{s,l}}\A_s^*
\end{bmatrix}
\begin{bmatrix}
\A_1\P_{\BQ_{1,l}}&\A_2\P_{\BQ_{2,l}}&\cdots&\A_s\P_{\BQ_{s,l}}
\end{bmatrix}\rn\\
&\leq \frac{\delta_{2r}(1+\delta_{3r})}{1-\delta_{2r}}+\delta_{3r}\\
&\leq \frac{2\delta_{3r}}{1-\delta_{3r}},
\end{align*}
where in the third line we use the following ARIP bound for $\alpha_{k,l}$
\begin{align*}
\frac{1}{1+\delta_{2r}}\leq\alpha_{k,l}= \frac{\|\P_{T_{k,l}}( \BG_{k,l})\|_F^2}{\| \A_k\P_{T_{k,l}}( \BG_{k,l})\|_2^2}\leq \frac{1}{1-\delta_{2r}}.\numberthis\label{eq:RIP_alpha}
\end{align*}
Combining the spectral of $\A_\alpha$ together with \eqref{eq:iht_eq1} and \eqref{eq:iht_eq2} gives 
\begin{align*}
\ln \begin{bmatrix}
\BX_{1,l+1}-\BX_1\\
\vdots\\
\BX_{s,l+1}-\BX_s
\end{bmatrix}\rn_F&\leq \frac{4\delta_{3r}}{1-\delta_{3r}}
\ln \begin{bmatrix}
\BX_{1,l}-\BX_1\\
\vdots\\
\BX_{s,l}-\BX_s
\end{bmatrix}\rn_F,
\end{align*}
which completes the proof of the theorem.
\end{proof}
\subsection{Proof of Theorem~\ref{thm:iht_noise}}\label{ss:iht_noise}
\begin{proof}
To prove the convergence of IHT under additive noise, we only need to modify \eqref{eq:iht_eq1} slightly as follows:
\begin{align*}
\ln \begin{bmatrix}
\BX_{1,l+1}-\BX_1\\
\vdots\\
\BX_{s,l+1}-\BX_s
\end{bmatrix}\rn_F&\leq 2\ln
\begin{bmatrix}
\P_{\BQ_{1,l}}(\BX_{1,l})-\P_{\BQ_{1,l}}(\BX_{1})+\alpha_{1,l}\P_{\BQ_{1,l}}(\BG_{1,l})\\
\vdots\\
\P_{\BQ_{s,l}}(\BX_{s,l})-\P_{\BQ_{s,l}}(\BX_{s})+\alpha_{s,l}\P_{\BQ_{s,l}}(\BG_{s,l})
\end{bmatrix}
\rn_F\\
&\leq2\ln
\begin{bmatrix}
\P_{\BQ_{1,l}}(\BX_{1,l}-\BX_{1})-\alpha_{1,l}\P_{\BQ_{1,l}}\A_1^*\sum_{k=1}^s\A_k(\BX_{k,l}-\BX_k)\\
\vdots\\
\P_{\BQ_{s,l}}(\BX_{s,l}-\BX_{s})-\alpha_{s,l}\P_{\BQ_{1,l}}\A_s^*\sum_{k=1}^s\A_k(\BX_{k,l}-\BX_k)
\end{bmatrix}
\rn_F\\
&+2\ln
\begin{bmatrix}
\alpha_{1,l}\P_{\BQ_{1,l}}\A_{1}^*(\be)\\
\vdots\\
\alpha_{s,l}\P_{\BQ_{s,l}}\A_s^*(\be)
\end{bmatrix}
\rn_F\\
&:=\II'_{11}+\II'_{12}.
\end{align*}
The proof of Theorem~\ref{thm:iht} shows that 
\begin{align*}
\II'_{11}\leq \frac{4\delta_{3r}}{1-\delta_{3r}}
\ln \begin{bmatrix}
\BX_{1,l}-\BX_1\\
\vdots\\
\BX_{s,l}-\BX_s
\end{bmatrix}\rn_F,
\end{align*}
while $\II'_{12}$ can be bounded as 
\begin{align*}
\II'_{12}&\leq 2\max_{k}|\alpha_{k,l}|\ln\begin{bmatrix}
\P_{\BQ_{1,l}}\A_{1}^*(\be)\\
\vdots\\
\P_{\BQ_{s,l}}\A_s^*(\be)
\end{bmatrix}
\rn_F\\
&=2\max_{k}|\alpha_{k,l}|\max_{\{\BY_k\}_{k=1}^s,\sum_{k=1}^s\ln\BY_k\rn_F^2=1}\sum_{k=1}^s\la \P_{\BQ_{k,l}}\A_{1}^*(\be),\BY_k\ra\\
&\leq \frac{2}{1-\delta_{2r}} \ln\be\rn\max_{\{\BY_k\}_{k=1}^s,\sum_{k=1}^s\ln\BY_k\rn_F^2=1}\ln\sum_{k=1}^s\A_k\P_{Q_{k,l}}(\BY_k)\rn\\
&\leq \frac{2\sqrt{1+\delta_{3r}}}{1-\delta_{2r}}\ln\be\rn,
\end{align*}
where in the third line we utilize the bound for $\alpha_{k,l}$ in \eqref{eq:RIP_alpha} and in the last line we utilize the definition of the ARIP.  Therefore,
\begin{align*}
\ln \begin{bmatrix}
\BX_{1,l+1}-\BX_1\\
\vdots\\
\BX_{s,l+1}-\BX_s
\end{bmatrix}\rn_F\leq \frac{4\delta_{3r}}{1-\delta_{3r}}
\ln \begin{bmatrix}
\BX_{1,l}-\BX_1\\
\vdots\\
\BX_{s,l}-\BX_s
\end{bmatrix}\rn_F+\frac{2\sqrt{1+\delta_{3r}}}{1-\delta_{2r}}\ln\be\rn. 
\end{align*}
The proof is complete after we apply the above inequality recursively.
\end{proof}
\subsection{Proof of Theorem~\ref{thm:fiht}}\label{ss:fiht}
\begin{proof}
Let $\BW_{k,l} = \BX_{k,l}+\alpha_{k,l}\P_{T_{k,l}}(\BG_{k,l})$. The following inequality holds:
\begin{align*}
\ln \begin{bmatrix}
\BX_{1,l+1}-\BX_1\\
\vdots\\
\BX_{s,l+1}-\BX_s
\end{bmatrix}\rn_F
=\sqrt{\sum_{k=1}^s\ln\BX_{k,l+1}-\BX_k\rn_F^2}\leq
\sqrt{4\sum_{k=1}^s\ln \BW_{k,l}-\BX_k\rn_F^2}=
2\ln \begin{bmatrix}
\BW_{1,l}-\BX_1\\
\vdots\\
\BW_{s,l}-\BX_s
\end{bmatrix}\rn_F,
\end{align*}
where the inequality follows from 
\begin{align*}
\ln\BX_{k,l+1}-\BX_k\rn_F\leq \ln\BX_{k,l+1}-\BW_{k,l}\rn_F+\ln\BW_{k,l}-\BX_k\rn_F\leq 2\ln \BW_{k,l}-\BX_k\rn_F.
\end{align*}
This leads to 
\begin{align*}
\ln \begin{bmatrix}
\BX_{1,l+1}-\BX_1\\
\vdots\\
\BX_{s,l+1}-\BX_s
\end{bmatrix}\rn_F&\leq 2\ln
\begin{bmatrix}
\BX_{1,l}+\alpha_{1,l}\P_{T_{1,l}}(\BG_{1,l})-\BX_1\\
\vdots\\
\BX_{s,l}+\alpha_{s,l}\P_{T_{1,l}}(\BG_{s,l})-\BX_s
\end{bmatrix}
\rn_F\\&=
2\ln\begin{bmatrix}
\BX_{1,l}-\BX_1-\alpha_{1,l}\P_{T_{1,l}}\A_1^*\sum_{k=1}^s\A_k(\BX_{k,l}-\BX_k)\\
\vdots\\
\BX_{s,l}-\BX_s-\alpha_{s,l}\P_{T_{s,l}}\A_s^*\sum_{k=1}^s\A_k(\BX_{k,l}-\BX_k)
\end{bmatrix}
\rn_F\\
&\leq 2\ln
\begin{bmatrix}
\P_{T_{1,l}}(\BX_{1,l}-\BX_1)-\alpha_{1,l}\P_{T_{1,l}}\A_1^*\sum_{k=1}^s\A_k\P_{T_{k,l}}(\BX_{k,l}-\BX_k)\\
\vdots\\
\P_{T_{s,l}}(\BX_{s,l}-\BX_s)-\alpha_{s,l}\P_{T_{s,l}}\A_s^*\sum_{k=1}^s\A_k\P_{T_{k,l}}(\BX_{k,l}-\BX_k)
\end{bmatrix}
\rn_F\\
&+2\ln
\begin{bmatrix}
\alpha_{1,l}\P_{T_{1,l}}\A_1^*\sum_{k=1}^s\A_k(\I-\P_{T_{k,l}})(\BX_{k,l}-\BX_k)\\
\vdots\\
\alpha_{s,l}\P_{T_{s,l}}\A_s^*\sum_{k=1}^s\A_k(\I-\P_{T_{k,l}})(\BX_{k,l}-\BX_k)
\end{bmatrix}
\rn_F\\
&+2\ln
{\begin{bmatrix}
(\I-\P_{T_{1,l}})\BX_1\\
\vdots\\
(\I-\P_{T_{s,l}})\BX_s
\end{bmatrix}}
\rn_F\\
&:=2\ln\II_2\rn_F+2\ln\II_3\rn_F+2\ln\II_4\rn_F.\numberthis\label{eq:fiht_main}
\end{align*}

Following the same argument for the bound of $\II_1$ in \eqref{eq:iht_eq1}, we can bound $\II_2$ as
\begin{align*}
\ln\II_2\rn_F\leq \frac{2\delta_{2r}}{1-\delta_{2r}}\ln \begin{bmatrix}
\BX_{1,l}-\BX_1\\
\vdots\\
\BX_{s,l}-\BX_s
\end{bmatrix}\rn_F
\end{align*}
by noting that all the matrices in $T_{k,l}$ are of rank at most $2r$.

Next, $\II_3$ can be bounded in the following way:
\begin{align*}
\ln\II_3\rn_F&\leq\max_{k}|\alpha_{k,l}|\sup_{\{\BY_k\}_{k=1}^s,\sum_{k=1}^s\ln\BY_k\rn_F^2=1}\lab\sum_{k=1}^s\la\BY_k,\P_{T_{k,l}}\A_k^*\sum_{k=1}^s\A_k(\I-\P_{T_{k,l}})(\BX_{k,l}-\BX_k)\ra\rab\\
&=\max_{k}|\alpha_{k,l}|\sup_{\{\BY_k\}_{k=1}^s,\sum_{k=1}^s\ln\BY_k\rn_F^2=1}\lab\sum_{k=1}^s\la\A_k\P_{T_{k,l}}\lb\BY_k\rb,\sum_{k=1}^s\A_k(\I-\P_{T_{k,l}})(\BX_{k,l}-\BX_k)\ra\rab\\
&\leq \max_{k}|\alpha_{k,l}|\sup_{\{\BY_k\}_{k=1}^s,\sum_{k=1}^s\ln\BY_k\rn_F^2=1}\delta_{3r}\sqrt{\sum_{k=1}^s\ln\P_{T_{k,l}}(\BY_k)\rn_F^2}
\sqrt{\sum_{k=1}^s\ln (\I-\P_{T_{k,l}})(\BX_{k,l}-\BX_k)\rn_F^2}\\
&\leq \frac{\delta_{3r}}{1-\delta_{2r}}\ln \begin{bmatrix}
\BX_{1,l}-\BX_1\\
\vdots\\
\BX_{s,l}-\BX_s
\end{bmatrix}\rn_F,
\end{align*}
where the third line follows from Lemma~\ref{lem:aux1} in the appendix, and the fourth line follows from the ARIP bound for $\alpha_{k,l}$ in \eqref{eq:RIP_alpha}. 

To bound $\II_4$, first note that the application of Lemma~\ref{lem:aux2} in the appendix gives
\begin{align*}
\|(\I-\P_{T_{k,l}})(\BX_k)\|_F\leq \frac{\|\BX_{k,l}-\BX_k\|_F^2}{\sigmin(\BX_k)}\leq \frac{\max_k\ln\BX_{k,l}-\BX_k\rn_F}{\sigmin}\ln\BX_{k,l}-\BX_k\rn_F,
\end{align*}
where $\sigmin:=\min_{k}\sigmin(\BX_k)$.
Thus,
\begin{align*}
\ln\II_4\rn_F\leq \frac{\max_k\ln\BX_{k,l}-\BX_k\rn_F}{\sigmin}\ln \begin{bmatrix}
\BX_{1,l}-\BX_1\\
\vdots\\
\BX_{s,l}-\BX_s
\end{bmatrix}\rn_F\leq 
\frac{1}{\sigmin}\ln \begin{bmatrix}
\BX_{1,l}-\BX_1\\
\vdots\\
\BX_{s,l}-\BX_s
\end{bmatrix}\rn_F\cdot\ln \begin{bmatrix}
\BX_{1,l}-\BX_1\\
\vdots\\
\BX_{s,l}-\BX_s
\end{bmatrix}\rn_F.
\end{align*}

Substituting the bounds for $\II_2$, $\II_3$ and $\II_4$ into \eqref{eq:fiht_main} gives
\begin{align*}
\ln \begin{bmatrix}
\BX_{1,l+1}-\BX_1\\
\vdots\\
\BX_{s,l+1}-\BX_s
\end{bmatrix}\rn_F&\leq 2\lb\frac{2\delta_{2r}}{1-\delta_{2r}}+\frac{\delta_{3r}}{1-\delta_{2r}}+\frac{1}{\sigmin}\ln \begin{bmatrix}
\BX_{1,l}-\BX_1\\
\vdots\\
\BX_{s,l}-\BX_s
\end{bmatrix}\rn_F\rb\ln \begin{bmatrix}
\BX_{1,l}-\BX_1\\
\vdots\\
\BX_{s,l}-\BX_s
\end{bmatrix}\rn_F.
\end{align*}
Assume
\begin{align*}
\rho:=2\lb\frac{2\delta_{2r}}{1-\delta_{2r}}+\frac{\delta_{3r}}{1-\delta_{2r}}+\frac{1}{\sigmin}\ln \begin{bmatrix}
\BX_{1,0}-\BX_1\\
\vdots\\
\BX_{s,0}-\BX_s
\end{bmatrix}\rn_F\rb < 1.
\end{align*}
By  induction, we have 
\begin{align*}
\ln \begin{bmatrix}
\BX_{1,l+1}-\BX_1\\
\vdots\\
\BX_{s,l+1}-\BX_s
\end{bmatrix}\rn_F&\leq \rho\ln \begin{bmatrix}
\BX_{1,l}-\BX_1\\
\vdots\\
\BX_{s,l}-\BX_s
\end{bmatrix}\rn_F
\end{align*}
Note that the initial guess is obtained by one-step hard thresholding with $\alpha_{k,l}=1$. So  following the proof of Theorem~\ref{thm:iht} we have 
\begin{align*}
\ln \begin{bmatrix}
\BX_{1,0}-\BX_1\\
\vdots\\
\BX_{s,0}-\BX_s
\end{bmatrix}\rn_F\leq 2\delta_{3r}
\ln \begin{bmatrix}
\BX_{1}\\
\vdots\\
\BX_{s}
\end{bmatrix}\rn_F.
\end{align*}
Therefore, one has $\rho<1$ if 
\begin{align*}
2\lb\frac{2\delta_{2r}}{1-\delta_{2r}}+\frac{\delta_{3r}}{1-\delta_{2r}}+\frac{2\delta_{3r}}{\sigmin}\ln \begin{bmatrix}
\BX_1\\
\vdots\\
\BX_s
\end{bmatrix}\rn_F\rb<1.\numberthis\label{eq:fiht_eq1}
\end{align*}
Since 
\begin{align*}
\ln \begin{bmatrix}
\BX_1\\
\vdots\\
\BX_s
\end{bmatrix}\rn_F=\sqrt{\sum_{k=1}^s\ln\BX_k\rn_F^2}\leq\sqrt{\sum_{k=1}^sr\sigmax^2(\BX_k)}\leq \sqrt{rs}\sigmax,
\end{align*}
\eqref{eq:fiht_eq1} holds if 
\begin{align*}
2\lb\frac{2\delta_{2r}}{1-\delta_{2r}}+\frac{\delta_{3r}}{1-\delta_{2r}}+2\delta_{3r}\sqrt{rs}\frac{\sigmax}{\sigmin}\rb<1,
\end{align*}
where $\sigmax:=\max_k\sigmax(\BX_k)$.
\end{proof}

\section{Conclusion and Future Directions}
\label{s:conclusion}

We have presented the first computationally efficient algorithms that can extract low rank matrices from a sum of linear measurements.
These algorithms have potential applications in areas  such as wireless communications and quantum tomography.
Numerical simulations show an empirical performance that is quite close to the information theoretic limit in terms of
demixing ability with respect to the number of measurements. 

At the same time, there are still a number of open
questions for further directions. Firstly,  the robustness analysis of FIHT needs to be addressed in the future. Secondly,  our theoretical framework so far is still a bit restrictive, since it only yields
close-to-optimal results for Gaussian measurement matrices, which are however of limited use in applications.
Thus, one future challenge consist in establishing good Amalgam-RIP bounds for structured measurement matrices.
Thirdly, it is also interesting to see whether  the Amalgam-RIP  can be used to analyse other approaches for  low rank matrix demixing. In particular, we want to investigate whether the Amalgam form of RIP in \eqref{eq:ARIP_def}, but restricted onto a local subspace, is sufficient for the  guarantee analysis of the nuclear norm minimization studied in \cite{LS15}. Finally, in this paper the Amalgam-RIP is defined for homogeneous data, i.e.,  matrices of low rank. It is likely that similar Amalgam-RIP can be established for  heterogeneous data and then be used in the analysis of different reconstruction programs.

\section*{Appendix}\label{s:appendix}
\begin{lemma}\label{lem:aux1}
Suppose $\la\BY_k,\BZ_k\ra=0$ and $\rank(\BY_k+\BZ_k)\leq c$ for all $1\leq k\leq s$. Then 
\begin{align*}
\lab\la \sum_{k=1}^s\A_k(\BY_k),\sum_{k=1}^s\A_k(\BZ_k)\ra\rab\leq\delta_c\sqrt{\sum_{k=1}^s\ln\BY_k\rn_F^2}\sqrt{\sum_{k=1}^s\ln\BZ_k\rn_F^2}.
\end{align*}
\end{lemma}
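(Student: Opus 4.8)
The plan is to prove this amalgamated near-orthogonality estimate by the classical polarization argument, now carried out for the combined operator $\sum_{k=1}^s\A_k(\cdot)$, followed by a scaling step that sharpens the constant from an arithmetic mean into the geometric mean displayed on the right-hand side. First I would record the polarization identity
\[
4\la \sum_{k}\A_k(\BY_k),\ \sum_{k}\A_k(\BZ_k)\ra = \ln\sum_{k}\A_k(\BY_k+\BZ_k)\rn^2 - \ln\sum_{k}\A_k(\BY_k-\BZ_k)\rn^2,
\]
which is immediate from the bilinearity of the Euclidean inner product and the linearity of each $\A_k$.

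The key point is that the two collections $\{\BY_k+\BZ_k\}_k$ and $\{\BY_k-\BZ_k\}_k$ are both admissible inputs for the ARIP at level $c$, since each of $\BY_k\pm\BZ_k$ has rank at most $c$. I would then apply the upper ARIP bound to the first term and the lower ARIP bound to the second, and invoke the Frobenius orthogonality $\la\BY_k,\BZ_k\ra=0$ to collapse both $\ln\BY_k+\BZ_k\rn_F^2$ and $\ln\BY_k-\BZ_k\rn_F^2$ to the common value $\ln\BY_k\rn_F^2+\ln\BZ_k\rn_F^2$. The two $1$'s then cancel and the $\delta_c$ contributions add, yielding
\[
\la \sum_{k}\A_k(\BY_k),\ \sum_{k}\A_k(\BZ_k)\ra \le \frac{\delta_c}{2}\sum_{k=1}^s\lb\ln\BY_k\rn_F^2+\ln\BZ_k\rn_F^2\rb.
\]
Interchanging the roles of the upper and lower ARIP bounds gives the matching estimate with the opposite sign, so the absolute value is controlled by $\tfrac{\delta_c}{2}\sum_k(\ln\BY_k\rn_F^2+\ln\BZ_k\rn_F^2)$.

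To upgrade this arithmetic-mean bound to the geometric mean in the statement I would exploit homogeneity. Replacing $\BY_k\mapsto t\BY_k$ and $\BZ_k\mapsto t^{-1}\BZ_k$ for $t>0$ leaves the left-hand inner product unchanged and preserves every hypothesis (both the orthogonality and the rank constraints on $\BY_k\pm\BZ_k$ are scale invariant), while the right-hand side becomes $\tfrac{\delta_c}{2}\lb t^2\sum_k\ln\BY_k\rn_F^2 + t^{-2}\sum_k\ln\BZ_k\rn_F^2\rb$. After disposing of the trivial cases in which one side vanishes, I would minimize over $t$ by choosing $t^4 = \sum_k\ln\BZ_k\rn_F^2\big/\sum_k\ln\BY_k\rn_F^2$, which balances the two terms and produces exactly $\delta_c\sqrt{\sum_k\ln\BY_k\rn_F^2}\sqrt{\sum_k\ln\BZ_k\rn_F^2}$.

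The main obstacle I anticipate is not the algebra but ensuring the ARIP is legitimately applied to the \emph{difference} as well as the sum: the hypothesis literally bounds $\rank(\BY_k+\BZ_k)$, yet the polarization argument also feeds $\BY_k-\BZ_k$ into the ARIP. I would dispose of this by appealing to the low-rank structure of the individual blocks as they actually arise—recall that in the bound for $\II_3$ within the proof of Theorem~\ref{thm:fiht} one has $\BY_k\in T_{k,l}$ and $\BZ_k\in T_{k,l}^\perp$, so $\rank(\BY_k-\BZ_k)\le\rank(\BY_k)+\rank(\BZ_k)\le c$ exactly as for the sum—so that both polarization terms are genuinely rank-$c$ admissible and every ARIP invocation above is justified.
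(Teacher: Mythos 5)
Your proof is correct and is essentially the paper's own argument: both rest on the polarization (parallelogram) identity combined with the upper and lower ARIP bounds applied to $\BY_k\pm\BZ_k$, and your final optimization over the scaling parameter $t$ is exactly equivalent to the paper's opening step of normalizing $\sum_k\ln\BY_k\rn_F^2=\sum_k\ln\BZ_k\rn_F^2=1$ by homogeneity. Your remark about the rank of $\BY_k-\BZ_k$ is a fair reading of a hypothesis the paper applies without comment, and your resolution via the orthogonal low-rank structure of the blocks in the actual application is sound.
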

\begin{proof}
Due to the homogeneity, we can assume $\sqrt{\sum_{k=1}^s\ln\BY_k\rn_F^2}=1,~\sqrt{\sum_{k=1}^s\ln\BZ_k\rn_F^2}=1$.
Since 
$\la\BY_k,\BZ_k\ra=0$, we have $\sum_{k=1}^s\ln\BY_k+\BZ_k\rn_F^2 = 2$. So
\begin{align*}
2(1-\delta_c)\leq \ln\sum_{k=1}^s\A_k(\BY_k\pm\BZ_k)\rn^2\leq 2(1+\delta_c)
\end{align*}
following from Def.~\ref{def:rip}. Then the application of the parallelogram identity implies 
\begin{align*}
\lab\la \sum_{k=1}^s\A_k(\BY_k),\sum_{k=1}^s\A_k(\BZ_k)\ra\rab=\frac{1}{4}\lab \ln\sum_{k=1}^s\A_k(\BY_k+\BZ_k)\rn^2-
\ln\sum_{k=1}^s\A_k(\BY_k-\BZ_k)\rn^2\rab\leq\delta_c,
\end{align*}
which concludes the proof.
\end{proof}
\begin{lemma}[{\cite[Lem.~4.1]{CGIHT_dense}}]\label{lem:aux2}
Let $\BX_{k,l}=\BU_{k,l}\BS_{k,l}\BV^*_{k,l}$ be rank $r$ matrix and $T_{k,l}$ be the tangent space of the rank $r$ matrix manifold at $\BX_{k,l}$. Let $\BX_k$ be another rank $r$ matrix. One has 
\begin{align*}
\|(\I-\P_{T_{k,l}})(\BX_k)\|_F\leq \frac{\|\BX_{k,l}-\BX_k\|_F^2}{\sigmin(\BX_k)}.
\end{align*}
\end{lemma}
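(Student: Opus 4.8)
The plan is to reduce $(\I-\P_{T_{k,l}})(\BX_k)$ to a genuinely two-sided residual and then extract a \emph{quadratic} bound from two orthogonal ``leakage'' factors. First I would record the explicit orthogonal projector onto the tangent space in \eqref{eq:tangent}. Writing $\P_{\BU_{k,l}}=\BU_{k,l}\BU_{k,l}^*$ and $\P_{\BV_{k,l}}=\BV_{k,l}\BV_{k,l}^*$ for the orthogonal projectors onto the column and row spaces of $\BX_{k,l}$, one has $\P_{T_{k,l}}(\BM)=\P_{\BU_{k,l}}\BM+\BM\P_{\BV_{k,l}}-\P_{\BU_{k,l}}\BM\P_{\BV_{k,l}}$, and therefore the clean factorization
\[
(\I-\P_{T_{k,l}})(\BM)=(\I-\P_{\BU_{k,l}})\,\BM\,(\I-\P_{\BV_{k,l}}).
\]
Applied with $\BM=\BX_k$, this shows the quantity to be bounded is $\|(\I-\P_{\BU_{k,l}})\BX_k(\I-\P_{\BV_{k,l}})\|_F$.

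Next I would exploit that $\BX_{k,l}$ is annihilated by the relevant projectors. Put $\BE:=\BX_k-\BX_{k,l}$. Since the column space of $\BX_{k,l}$ is exactly the range of $\BU_{k,l}$, we have $(\I-\P_{\BU_{k,l}})\BX_{k,l}=0$, and likewise $\BX_{k,l}(\I-\P_{\BV_{k,l}})=0$. Consequently $\BX_k$ may be replaced by $\BE$ whenever an orthogonal projector hits it from the appropriate side, for instance $(\I-\P_{\BU_{k,l}})\BX_k=(\I-\P_{\BU_{k,l}})\BE$ and $\BX_k(\I-\P_{\BV_{k,l}})=\BE(\I-\P_{\BV_{k,l}})$, each of Frobenius norm at most $\|\BE\|_F$.

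The heart of the argument is a subspace-angle bound, which is where the factor $\sigmin(\BX_k)$ appears. Writing the SVD $\BX_k=\BU_k\BS_k\BV_k^*$ with $\BS_k$ invertible (as $\BX_k$ has rank exactly $r$), I would estimate
\[
\sigmin(\BX_k)\,\|(\I-\P_{\BU_{k,l}})\BU_k\|_F\leq\|(\I-\P_{\BU_{k,l}})\BU_k\BS_k\|_F=\|(\I-\P_{\BU_{k,l}})\BX_k\|_F=\|(\I-\P_{\BU_{k,l}})\BE\|_F\leq\|\BE\|_F,
\]
using that right multiplication by $\BV_k^*$, whose rows are orthonormal, preserves the Frobenius norm, and that $\BS_k$ scales each column of $(\I-\P_{\BU_{k,l}})\BU_k$ by at least $\sigmin(\BX_k)$. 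This yields $\|(\I-\P_{\BU_{k,l}})\BU_k\|_F\leq\|\BE\|_F/\sigmin(\BX_k)$.

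Finally I would assemble the pieces by splitting the two-sided residual across the SVD of $\BX_k$, namely
\[
(\I-\P_{\BU_{k,l}})\BX_k(\I-\P_{\BV_{k,l}})=\big[(\I-\P_{\BU_{k,l}})\BU_k\big]\big[\BS_k\BV_k^*(\I-\P_{\BV_{k,l}})\big],
\]
and applying submultiplicativity of the spectral and Frobenius norms: the first factor is bounded in spectral norm by its Frobenius norm $\leq\|\BE\|_F/\sigmin(\BX_k)$, while the second factor equals $\BU_k^*\BX_k(\I-\P_{\BV_{k,l}})=\BU_k^*\BE(\I-\P_{\BV_{k,l}})$ and so has Frobenius norm at most $\|\BE\|_F$. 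Multiplying the two bounds gives the claimed $\|\BE\|_F^2/\sigmin(\BX_k)$. I expect the main obstacle to be purely one of organization: a naive one-sided estimate produces only the linear bound $\|\BE\|_F$, and obtaining the quadratic gain with \emph{only} the smallest singular value in the denominator requires keeping one orthogonal projector on each side and converting exactly one of them, through the invertibility of $\BS_k$, into the subspace-angle factor $\|\BE\|_F/\sigmin(\BX_k)$.
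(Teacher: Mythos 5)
Your proof is correct, and all of its steps check out: the projector identity $(\I-\P_{T_{k,l}})(\BM)=(\I-\P_{\BU_{k,l}})\,\BM\,(\I-\P_{\BV_{k,l}})$, the replacement of $\BX_k$ by $\BE=\BX_k-\BX_{k,l}$ on each side via $(\I-\P_{\BU_{k,l}})\BX_{k,l}=0=\BX_{k,l}(\I-\P_{\BV_{k,l}})$, and the conversion of exactly one projector into the subspace-angle factor $\lab(\I-\P_{\BU_{k,l}})\BU_k\rab$ with norm at most $\ln\BE\rn_F/\sigmin(\BX_k)$ through the invertibility of $\BS_k$, together yield the claimed quadratic bound. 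Note that the paper itself offers no proof to compare against---it imports the lemma verbatim from \cite[Lem.~4.1]{CGIHT_dense}---and your argument is a faithful, self-contained reconstruction of the standard proof given in that reference.
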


\bibliography{demix,optim}
\bibliographystyle{siam}



\end{document}